\tikzstyle{place}=[circle,draw,minimum size=2mm]
\tikzstyle{transition}=[rectangle,draw,fill=black,minimum width=6mm,inner ysep=1pt]
\tikzset{
	every token/.style={minimum size=1pt},
	token distance=3pt
	anchor=base,
}
\tikzset{
    vertex/.style={
      draw, 
      circle,
      fill,
      inner sep=0pt,
      outer sep=0pt,
      minimum size=0pt,
    }
}
\newcommand\tbd[1]{#1}
\newcommand{\minisec}[1]{} 
\newcommand\formalism{\gram}
\newcommand\dagformalism{DAG grammar $\gram = (N,\Sigma,R)$}
\let\mod\relax\DeclareMathOperator{\mod}{{}\mathrm{mod}{}}
\newcommand{\mul}[1]{\mathcal M(#1)}
\newcommand\nullconfig{\mathbf{0}} 
\newcommand{\meta}[1]{\underline{#1}}
\newcommand{\metaQ}{\mathcal Q}
\newcommand{\metaQtext}[1]{\metaQ{}_{\text{#1}}}
\newcommand{\metaq}{\mathbb{q}}
\newcommand\metaQmin{\metaQtext{min}}
\newcommand\metaqmin{\metaQtext{min}(\gram)}
\newcommand\metaQzero{\metaQ_\nullconfig} 
\newcommand\metaqzero{\metaQzero(\gram)}
\newcommand{\gram}{\mathcal G}
\newcommand\gramdef{\gram = (N, \Sigma, R)}
\newcommand\dagdef{G = (V, E, \lab, \IN, \OUT)}
\newcommand{\To}{\Rightarrow}
\newcommand{\ToR}{\To^*_R}
\newcommand{\Tor}{\To_r}
\newcommand{\ToPi}{\To_\Pi}
\newcommand\Gstar{\gram_{star}}
\newcommand\Gtree{\gram_{tree}}
\newcommand\Groot{G_\text{\textit{root}}}
\newcommand\rr{r} 
\newcommand\rl{l} 
\newcommand{\anchor}[2]{\tikz[remember picture,baseline=-.3ex,inner xsep=0,inner ysep=2pt]{\node (#1) {$#2$};}}
\newcommand*\circled[1]{%
  \tikz[%
        node distance=7pt,
        baseline=(vertexlabel.base),
        edgestring/.style={inner sep=0pt},
        vertex/.style={
                       draw,
                       shape=rounded rectangle,
                       inner ysep=0pt,
                       inner xsep=-2pt
                      }
       ]{
    \node[vertex] (vertexlabel) {\rule[-3pt]{0pt}{\dimexpr2ex+2pt}#1};
    \node[edgestring, left=of vertexlabel]  (ingoing)  {};
    \node[edgestring, right=of vertexlabel] (outgoing) {};
    \draw [->>] (ingoing) -- (vertexlabel);
    \draw [->>] (vertexlabel) -- (outgoing);
}}
\newcommand{\lab}{\mathit{\ell}}
\newcommand{\IN}{\mathit{in}}
\newcommand{\OUT}{\mathit{out}}
\newcommand{\src}{\mathit{src}}
\newcommand{\tar}{\mathit{tar}}
\newcommand{\D}{\mathcal D}
\newcommand{\labelr}{\mathtt{r}}
\newcommand{\labelc}{\mathtt{c}}
\newcommand{\labell}{\mathtt{l}}
\newcommand{\labelm}{\mathtt{m}}
\newcommand{\labelu}{\textup{u}} 
\newcommand{\rarrow}[1]{\circled{$\mathtt{#1}$}}
\newcommand{\rarrowl}{\rarrow{l}}
\newcommand{\sigmaarrow}{\circled{$\sigma$}}
\newcommand{\sigmaArrow}[1]{\circled{$\sigma_{#1}$}}
\newcommand{\earrow}{\circled{}} 
\newcommand{\union}{\mathbin{\&}}
\newcommand{\swapdee}{D(e_0 \bowtie e_1)}
\newcommand{\gswap}{\DAG\swapdee}
\newcommand{\swapping}[4]{#1(#2\bowtie#3)^{#4}}
\newcommand{\DAG}[1]{\lfloor #1\rfloor}
\newcommand\gd{\DAG D}
\newcommand{\RDL}{\ensuremath{\mathit{RDL}}}
\newcommand{\RDLdet}{\ensuremath{\mathit{RDL^\text{det}}}}
\newcommand{\emptygraph}{\varnothing}
\newcommand\pcolor{tyrianpurple}
\newcommand\qcolor{pakistangreen}
\newcommand\colorp{\color{\pcolor}} 
\newcommand\colorq{\color{\qcolor}} 
\newcommand\p{{\colorp p}}
\newcommand\q{{\colorq q}}
\newcommand
\newcommand
\newcommand
\newcommand\starrone{$\lambda \rarrow{r} qp$}
\newcommand\bowrzero{$\lambda \earrow{} pq$}
\newcommand\bowrone{$p \earrow{} pq$}
\newcommand\starrtwo{$pq \rarrow{l} \lambda $}
\newcommand\bowrtwo{$pq \earrow{} p $}
\newcommand\bowrtwozero{$pq \earrow{} \lambda $}
\newcommand\rsymbol{\mathtt{r}}
\newcommand\msymbol{\mathtt{m}}
\newcommand\lsymbol{\mathtt{L}}
\newcommand\osymbol{\mathtt{o}}
\newcommand\therule[1]{r_{#1}}
\newcommand\mrule{\therule{m}}
\newcommand\lrule{\therule{l}}
\renewcommand\mul[1]{\mathbb N^{#1}}
\newcommand\groot{G_{\text{root}}}
\newcommand\metagroot{\meta G_{\text{root}}}
\newcommand\vnext{v_\text{next}}
\newcommand\vroot{v_\text{root}}
\newcommand\Lgram{L(\gram)}
\newcommand\Lall{L(\gram)^{\&}} 
\newcommand\Gbow{\gram_\text{bow}}
\newcommand\Lbow{L(\Gbow)}
\newcommand\LQ{L^\metaQ(\formalism)}
\newcommand\LQbow{L^\metaQ(\Gbow)}
\newcommand\m[1]{\bar #1}  
\newcommand\mex[1]{\hat #1} 
\newcommand\men[1]{\check #1} 
\newcommand\FID{\mathtt{FID}} 
\newcommand\FD{\mathtt{FD}} 
\newcommand\ID{\mathtt{ID}}
\newcommand\AMR{\texttt{AMR}}
\newcommand\lipicsfont[1]{{\textcolor{lipicsGray}{\sffamily\bfseries\upshape\mathversion{bold}(#1)}}}
\newcommand\lipicsrightdescr[1]{\hfill\lipicsfont{#1}}
\newcommand\mgrule{(\m\alpha\sigmaarrow\m\beta)}
\newcommand\rmgrule{\m r = \mgrule}
\newcommand\refstandard{\cite{journals/iandc/BlumDrewes2019}}
\newtheorem{theorem}{Theorem}[section]
\newtheorem{lemma}[theorem]{Lemma}
\newtheorem{corollary}[theorem]{Corollary}
\newtheorem{observation}[theorem]{Observation}
\newtheorem{definition}[theorem]{Definition}
\newtheorem{example}[theorem]{Example}
\newtheorem*{pdconjecture*}{Pushdown Conjecture}
\title{A New Notion of Regularity:\\ Finite State Automata Accepting Graphs}
\author{Yvo Ad Meeres
\institute{Department of Theoretical Computer Science\\
University of Bremen\\
Bremen, Germany}
\email{yvo.meeres@mailbox.org}
}
\newcommand{\titlerunning}{Finite State Automata Accepting Graphs}
\newcommand{\authorrunning}{Yvo Ad Meeres}
 \definecolor[named]{lipicsGray}{rgb}{0.31,0.31,0.33}
\begin{document}
\maketitle

\begin{abstract}
Analogous to regular string and tree languages,
 regular languages of directed acyclic graphs (DAGs) are defined in the literature.
Although called regular, those DAG-languages are more powerful and, consequently,
standard problems have a higher complexity than in the string case.
Top-down as well as bottom-up deterministic DAG languages are subclasses of the regular DAG languages.
We refine this hierarchy by providing a weaker subclass of the deterministic DAG languages.
For a DAG grammar generating a language in this new DAG language class, or,
equivalently, a DAG-automaton recognizing it,
a classical deterministic finite state automaton (DFA) can be constructed.
As the main result, we provide a characterization of this class.

The motivation behind this is the transfer of techniques for regular string languages to graphs. 
Trivially, our restricted DAG language class is closed under union and intersection.
This permits the application of minimization and hyper-minimization algorithms known for DFAs.
This alternative notion of regularity coins at the existence of a DFA for recognizing a DAG language.
\end{abstract}


\section{Introduction}

\begin{figure}[t]
\includegraphics[width=1\textwidth]{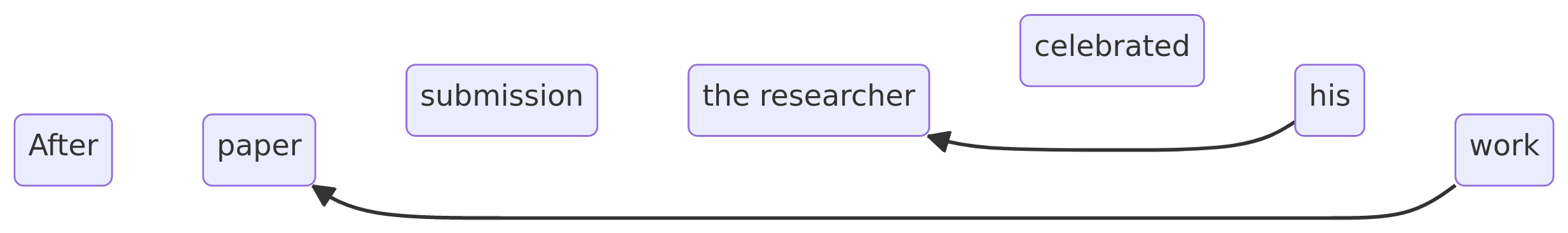}
\caption{Classical NLP parsing is blind to coreferences within sentences,
since trees cannot represent these edges within the parse tree.
Graphs, on the contrary, are capable of showing coreferences between e.g.~words of parsed sentences.
For the above sentence,
its parse trees could neither model the obvious possessive relation
between the possessive pronoun \emph{his} and \emph{the researcher}
nor the semantic kind of equivalence relation requiring world knowledge
between the \emph{paper} and the \emph{work}.
But, a semantic graph like an \AMR{} DAG~\cite{quernheim-knight:12b} could.
The capabilities of semantic graphs
are illustrated in~\cite{daggrammar} 
as well as for a complex sentence in~\cite{DBLP:conf/mol/Drewes17}
by means of the representation of a sentence as an \AMR{} DAG. 
}\label{fig/intro-nlp-dag}
\end{figure}

Many research fields either struggle with the complexity of processing graphs
-- for example fields like high performance computing~\cite{DBLP:journals/corr/GharaibehSCR13}
or neurocomputing~\cite{DBLP:journals/ijon/BoccatoFDT24}, to mention just a few --
or by encoding their graph problems as strings, see e.g.~\cite{DBLP:journals/pr/DuttaLP13, DBLP:journals/spe/DiasPDA14}.
The well-researched class of regular string languages,
recognized by finite state automata (FSAs),
exhibits a fruitful balance between
expressiveness and efficiency concerning standard algorithmic problems.
The problem is, that these algorithms are only applicable to strings.
One approach would be to provide efficient graph algorithms
for specific problems,
as in~\cite{DBLP:conf/atmos/MatsubayashiS23} for the membership problem
circumventing Braess's Paradox \cite{DBLP:journals/mmor/Braess68}
or in~\cite{DBLP:journals/jgaa/BestW22} providing a faster algorithm
for the very specific problem
of the maximum independent set on interval filament graphs.
Instead of the cumbersome approach to tackle all these specific problems one by one, 
our meta-approach suggests porting all efficient algorithms known
for string processing to graph processing in one sweep.
To port a wide range of well-known efficient algorithms (based on FSAs)
from strings to graphs,
this article introduces FSAs recognizing sets of
directed acyclic graphs (DAGs) instead of sets of strings.
Such sets are called DAG languages.
We consider \emph{vertex-labeled DAGs} with unlabeled edges
but label those edges for accepting a DAG.
A classical FSA accepts a string by reading it symbol by symbol.
Our proposed FSA accepts a DAG by reading top-down vertex by vertex instead.
A symbol read by the automaton encodes a whole vertex,
consisting of its vertex label and 
its ordered and labeled in- and outgoing edges.
While reading the vertices top-down,
the outgoing edges in a DAG are labeled according to the automaton’s specification
while the ingoing edges, labeled beforehand, have to match.
The FSA's states keep track of those ingoing edge labels
whose target vertices are not yet read,
thus whose outgoing edges are unlabeled.
The class of DAG languages accepted
by such an FSA is a proper subset of the top-down deterministic regular DAG languages 
defined in the literature~\cite{journals/iandc/BlumDrewes2019}.
This class, in turn, is a proper subset of
the regular DAG languages~\cite{journals/iandc/BlumDrewes2019}.

In literature, the notion of regularity concerning DAG languages differs
from that applied to string languages. Regular DAG automata recognize regular DAG languages~\cite{journals/iandc/BlumDrewes2019}.
These automata are one of the formalisms~\cite{DBLP:conf/mol/Drewes17} proposed in the literature
to model semantics by using Abstract Meaning Representation~(\AMR{})~\cite{banarescu.etal:13}.
DAG automata were originally introduced by Kamimura and Slutzki \cite{kamimura-slutzki:81,kamimura-slutzki:82}.
A promising alternative formalism, not considered in this paper, is the hyperedge replacement graph grammar \cite{DBLP:conf/gg/HabelK86}.
Classical natural language parsing turns a sentence into a parse tree while semantic parsing,
such as \AMR{} parsers~\cite{damonte2018cross-lingual,
damonte2017incrememntal, DBLP:journals/linguamatica/SenoCIAR22},
can model coreferences between e.g. the words of a sentence.
Fig.~\ref{fig/intro-nlp-dag} shows two such relations which turn a parse tree into a DAG.
Such semantic relations, expressible in \AMR{},
specify that the work was conducted by the mentioned researcher (\emph{researcher} $\leftarrow$ \emph{his})
and writing the paper is the researcher's daily work (\emph{work} $\rightarrow$ \emph{paper}).
Semantic parsing provides vital contributions to improve natural language processing~(NLP).
The anecdotes about AI chat bots inventing facts feed wishes for NLP improvements by ensuring semantic consistency.
This consistency is desirable also for sophisticated spell and grammar checking and machine translations.

The membership problem for regular DAG automata
surprisingly being NP-complete~\cite{DBLP:journals/coling/ChiangDGLS18,DBLP:journals/tcs/Fujiyoshi10},
the uniform and even the non-uniform one\footnote{%
The uniform membership problem asks for a given automaton and a given graph whether the graph is an element of the given language;
the non-uniform membership problem asks for a fixed automaton and a given graph whether the automaton accepts the graph, making the membership problem potentially easier.%
},
finding strategies for identifying efficient semantic parsing algorithms remains the core problem.
By determinizing, either top-down or bottom-up, the membership problem becomes tractable.
However, even with restrictions 
like determinism or planarity~\cite{kamimura-slutzki:81,DBLP:journals/corr/abs-1810-12266},
parsing problems can easily become too complex.
For instance, Vasiljeva et al. were surprised that for certain probabilistic
DAG automata non-trivial probability distributions are necessary
to assign weights \cite{DBLP:journals/corr/abs-1810-12266}.
Since the notions of regularity differ for the string and DAG case, we propose the \emph{new notion of regularity}: use the string case regularity for DAGs in order to obtain better algorithmic properties. The regularity notion for DAGs, presented in literature, seems to be too powerful to provide efficient algorithms. Viewing DAG languages only then as regular when they can be recognized by an FSA, provides deep insight into the structural properties of DAG languages.

Although the mildly context sensitive upper bound for natural language parsing
classifies parsing as lying between context-free and context-sensitive formalisms,
finite state descriptions of languages are of major importance \cite{hurskainen2023rule}.
For the sake of efficiency, this field often seeks to digest also mildly context sensitive structures
with finite state methods \cite{pirinen2023finite,linden2009weighting, pirinen2010finite},
an approach conceivable also for DAG digestion.
The conference \emph{Finite State Methods in Natural Language Processing} (FSMNLP)
concentrates on this lowest level of the Chomsky hierarchy.
Many NLP tools, like \texttt{apertium}, \texttt{HFST} or \texttt{GiellaLT}~\cite{hfst2023, swanson2023apertium, moshagen2023giellalt}
operate with finite state descriptions.

~


\begin{figure}[t]
\centering
\subfloat[Classical and DAG regularity]{
  \label{pic:regularity}
  \begin{tikzpicture}[semic/.style args={#1,#2}{semicircle,minimum  
width=#1,draw,anchor=arc end,rotate=#2},outer sep=0pt,line width=.7pt]

\node[align=center] at (3,-3.2){\emph{classical}\\\emph{regularity:}\\$\FD$};
\draw(1.55,-3) -- (2.1,-3.2);

\draw[pattern = {Dots}] (0.5, -3) circle (1cm) {};
\draw[pattern = {Dots}] (-0.5, -3) circle (1cm) {};
\fill[fill = white] (-0.5, -2) rectangle (0.5, -4);
\fill[pattern = {Dots}] (-0.5, -2) rectangle (0.5, -4);
\draw[clip] (0,0) circle (3cm);
\draw[black, thick, align=center] (0,0) circle (3cm) node[above=0cm] {\emph{DAG}\\\emph{regularity:}\\$\ID{} \cup \FID$};
\node[align=center] at (1,-1){$\ID$};
\node[align=center] at (0,-2.5){$\FID$};
\end{tikzpicture}
}
\hfill
\subfloat[DAG language classes]{
  \label{pic:dag-classes}
  \begin{tikzpicture}[semic/.style args={#1,#2}{semicircle,minimum  
width=#1,draw,anchor=arc end,rotate=#2},outer sep=0pt,line width=.7pt]

\node at (2,-3){$\FD$};
\draw(1.55,-3) -- (1.8,-3);
\node at (2.5,-2.5){$\FID$};
\draw(1,-2.5) -- (2.2,-2.5);
\node (id) at (3.1,-1.5){$\ID$};
\draw[black] (id) -- (2.2,-.5);
\draw (id) -- (2,-1.5);

\fill[green, fill opacity=0.3] (-0.5, -1) rectangle (0.5, -2);
\begin{scope}
\draw[clip] (0,0) circle (3cm);
\fill[yellow, fill opacity = 0.3] (-3, -1) rectangle (-0.5, -4);
\fill[blue, fill opacity = 0.3] (0.5, -1) rectangle (3, -4);
\end{scope}
\begin{scope}
\fill[green, fill opacity = 0.5] (-0.5, -2) rectangle (0.5, -4);
\draw[pattern = {Dots}] (-0.5, -2) rectangle (0.5, -4);
\end{scope}
\draw[black] (-0.5, -2) -- (0.5, -2);
\draw[black] (-0.5, -1) rectangle (0.5, -4);
\node[fill = white, semic={2cm, 90}] at (-0.5, -4) {};
\node[fill = yellow, fill opacity = 0.7, semic={2cm, 90}] at (-0.5, -4) {};
\node[pattern = {Dots}, semic={2cm, 90}] at (-0.5, -4) {};
\node[fill = white, semic={2cm, 270}] at (0.5, -2) {};
\node[fill = blue, fill opacity = 0.5, semic={2cm, 270}] at (0.5, -2) {};
\node[pattern = {Dots}, semic={2cm, 270}] at (0.5, -2) {};
\draw[clip] (0,0) circle (3cm);
\begin{scope}
\draw[clip] (0,0) circle (3cm);
\node[fill = white, semic={2cm, 90}] at (-0.5, -4) {};
\node[fill = yellow, fill opacity = 0.3, semic={2cm, 90}] at (-0.5, -4) {};
\node[pattern = {Dots}, semic={2cm, 90}] at (-0.5, -4) {};
\node[fill = white, semic={2cm, 270}] at (0.5, -2) {};
\node[fill = blue, fill opacity = 0.3, semic={2cm, 270}] at (0.5, -2) {};
\node[pattern = {Dots}, semic={2cm, 270}] at (0.5, -2) {};
\fill[white] (-0.5, -2) rectangle (0.5, -4);
\fill[green, fill opacity = 0.3] (-0.5, -2) rectangle (0.5, -4);
\draw[pattern = {Dots}] (-0.5, -2) rectangle (0.5, -4);
\end{scope}
\draw[black, thick] (0,0) circle (3cm) node[above=0cm] {regular DAG languages};
\draw[black] (-3, -1) -- (3,-1);
\end{tikzpicture}
}
\captionsetup{subrefformat=parens}
\caption{%
\em
Overview over the language classes
\\
In both Venn diagrams,
the circle denotes the the regular DAG languages
whereas the oval denotes the language class $\FD$.
The intersection between the two is the class $\FID$
which is both closed under edge swap as well as under DFA-construction.
The dotted part, the oval, corresponds to $\FD$.
The non-dotted part corresponds to $\ID$.
\\
\subref{pic:regularity}
Classically, the term \emph{regularity} refers to FSAs
and thus to string languages.
This does not match the notion of regularity for DAG languages.
The two notions match only for languages in $\FID$.
\\
\subref{pic:dag-classes}
Top-down determinism and bottom-up determinism are colored in yellow and blue.
Consequently, green stands for languages which are both top-down as well as bottom-up deterministic.
In the right Venn diagram,
whereas all colored fields are deterministic,
the nondeterministic part corresponds to the white part
The class $\ID$ comprises those regular DAG languages which are not in $\FID$ (and consequently not in $\FD$),
and which are either (top-down / bottom-up) deterministic or non-deterministic.}
\label{fig:classes}
\end{figure}

All this said, our contributions,
see  them illustrated in Fig.~\ref{fig:classes}, can be stated as
\begin{itemize}
\item the idea of using a classical FSA to recognize a DAG language
\item the notion of a \emph{meta-state}, a multiset of edge labels, serving as the states of the FSA
\item separation of the regular top-down deterministic DAG languages into those recognizable ($\FID$)
and those not recognizable ($\ID$) via an FSA by means of the meta-state technique
\item restricting a DAG automaton by meta-states, resulting in the class $\FD$ comprising $\FID$
      but not being a subset of the regular DAG languages
\item characterization of the newly defined classes (\emph{main result}).
\end{itemize}

Providing an FSA for a DAG language immediately opens up
a wide range of results for DAGs formerly applicable only to strings.
The folklore algorithm of FSA minimization can be applied,
just as algorithms for lossy FSA compression,
called hyper-minimization \cite{DBLP:journals/ijfcs/Badr09,DBLP:journals/ijfcs/MalettiQ11},
where \emph{hyper} stands for the tolerance of finitely many errors.
Morphological transducers already prove
hyper-minimizations being useful for NLP~\cite{pirinen2014hyper}.

Also from a structural point of view a deeper understanding of DAG language classes
and  a suitable overall hierarchy would be a beautiful result for theoretical computer science.
As mentioned in the beginning, 
many research fields require efficient graph algorithms
and therefore could potentially profit from these very limited DAG languages since they are parsable as efficiently as regular string languages. Even though their expressiveness is quite limited, implying that we cannot encode arbitrary graph languages, for a variety of important problems, the limited expressiveness will suffice, and algorithms can be ported directly from the string case.

\section{Preliminaries}\label{notation}

The set of non-negative integers is denoted by $\mathbb{N}$.
For $n \in \mathbb{N}$ we define $[n] = \{1,\dots,n\}$.
For a set $A$, we denote its cardinality by $|A|$.
A finite set $A$ is called an \emph{alphabet},
an element $a \in A$ is a \emph{symbol},
a \emph{string} is the concatenation of symbols,
the set of all finite strings over $A$ is denoted by $A^*$
and a, not necessarily proper, subset of $A^*$ is called a \emph{language}.
The empty string of length $0$ is denoted by $\lambda$.
The length of a string $w \in A^*$ is denoted by $|w|$ and
$[w]$ denotes the smallest set $A$ such that $w\in A^*$.
The concatenation of two strings $a$ and $b$ is written as its juxtaposition~$ab$.
For a string $w_1w_2 \dots w_i \dots w_n$ of length $n$ over $A$,
the \emph{position} of a symbol $w_i \in A$ is $i \in [n]$.
%
The canonical extensions of a function $f\colon X\to Y$
to the power set of $X$ and to $X^*$
are denoted by $f$ as well.
Thus, $f(\{x_1,\dots,x_n\})=\{f(x_1),\dots,f(x_n)\}$
and   $f(x_1\cdots x_n)    =f(x_1)\cdots f(x_n)$
for all $x_1,\dots,x_n\in X$.
For a set $Y$ a (locally finite) multiset over $Y$ is a function $\mu\colon Y\to\mathbb N$.
For brevity, we give a specific multiset 
by a string notation $y_{1}^{\mu(y_{1})}\cdots y_{n}^{\mu(y_{n})}$ for $y_1 \dots y_n \in Y$.
The size of $\mu$ is $|\mu|=\sum_{y\in Y}\mu(y)$. (Formally, $|\mu|=\infty$ if $\mu(y)\ge 1$ for infinitely many $y$, but this case will not be relevant for this paper, i.e., all multisets appearing here will be finite.)
We denote the set of all multisets over $Y$ by $\mul Y$.
For a function $f\colon X\to Y$, we let $f_M\colon 2^X\to\mul Y$ be the mapping such that,
for every $X'\subseteq X$, $f_M(X')$ is the multiset of images of elements of $X'$ under $f$. 
Thus, formally, $f_M(X')(y)=|\{x\in X'\mid f(x)=y\}|$ for every $y\in Y$. 

This article studies languages of vertex-labeled, directed multigraphs without loops and
with ordered unlabeled edges (called \emph{graphs}, see~Def.\,\ref{def:graph})
which are acyclic (called \emph{DAGs}, see~Def.\,\ref{def:dag}).
Edges will be labeled only temporarily
by a  grammar (c.f.~Def.\,\ref{def:grammar}), its equivalent automaton
or classical finite state automaton (see~Section\,\ref{sec:fsa}).
\begin{definition}[\tbd{Graph}]\label{def:graph}
A \emph{graph} over $\,\Gamma$ is a tuple $\dagdef$
with $\,\Gamma$, $V\!$ and $E$ being disjoint finite sets,
the sets of \emph{vertex labels}, \emph{vertices} and \emph{edges}, respectively.
The vertices are labeled by $\lab \colon V \rightarrow \Gamma$.
For an \emph{edge} $e \in E$ 
between
the vertices $(v, w) \in V \times V\!$,
\emph{directed} from $v$ to $w$, with $v \neq w$,
the \emph{source} $v$ is referenced by $\src(e)$ and 
the \emph{target} $w$ by $\tar(e)$.
By $\IN, \OUT \colon V \rightarrow E^*$ we assign to each vertex $v \in V$
its \emph{incoming} and \emph{outgoing} edges
such that
$\src(e) = v \Leftrightarrow e \in [\OUT(v)]$
and
$\tar(e) = v \Leftrightarrow e \in [\IN(v)]$.
These edges are ordered as specified
by the strings $\IN(v)$ and $\OUT(v)$.
%
The \emph{empty graph} $\emptygraph$ is the graph whose set of vertices is empty.
A vertex is called a \emph{root} or a \emph{leaf}
if $\IN(v)$ or $\OUT(v)$ are empty, respectively.
The disjoint union of graphs, meaning disjoint sets of vertices and edges,
is denoted by the operator $\&$.
\end{definition}
%
\begin{definition}[\tbd{Path}]\label{def:path}%
A \emph{path} in a graph $\dagdef$
is a nonempty sequence of edges $e_1,\dots,e_n$,
$e_i \in E$ for $i \in [n]$,
yielding a unique alternating sequence $v_0e_1v_1 \cdots e_nv_n$
with vertices $v_0,\dots,v_n\in V\!$ such that
$\{\src(e_i),$ $\tar(e_i)\}= \{v_{i-1}, v_i\}$ for all $i\in[n]$.
Such a path is a \emph{cycle} if $v_0=v_n$.
A \emph{path between} $s$ and $t$ is a path with
$s \in \{v_0, e_1\}$ and
$t \in \{v_n, e_n\}$.
A path is \emph{directed} if for $i \in [n]$ either
$\forall i:$
	$\tar(e_i)=v_i$
or
$\forall i:$ 
	$\tar(e_i) = v_{i-1}$
and we call it a \emph{path from $s$ to $t$}
if it is a directed path between $s$ and $t$ with 
$\forall i:$
	$\tar(e_i)=v_i$.
The \emph{length} of a path is the number of its edges,
written as $|e_1,\dots,e_n| = n$.
The graph $G$ is said to be \emph{connected}
if there is a path between each pair of vertices.
In a path specification,
we may denote the vertices and edges
$a \in V \cup E$
by their respective label $\lab(a)$.
\end{definition}%
%
\begin{definition}[\tbd{Chord Path}]\label{def:chord}%
A \emph{chord path} of a cycle
shares its end vertices
with its corresponding cycle,
but none of its edges \cite{chordpath17, chordpath19}.
Given a graph $\dagdef$,
let the path $c = e_1,\dots,e_n$
with $e_i\!\in\!E$ for $i\!\in\![n]$,
be a cycle yielding $v_ne_1 \cdots e_nv_n$.
A \emph{chord~path} of the cycle $c$
is a path $e'_1,\dots,e'_m$ with
$e'_j\!\in\!E$ for $j\!\in\![m]$ yielding
$v'_0e'_1v'_1 \cdots e'_nv'_m$
with vertices $v'_0,\dots,v'_m\in\!V\!$ such that
$v'_0 \neq v'_m$ and
$\{ v'_0, v'_m \} =
  \{v_0, \dots , v_{n}\} 
   \cap
  \{v_0', \dots , v'_{m}\}$ 
but 
$\{e_i \:|\: i\!\in\![n] \}$ and
$\{e'_j\:|\: j\!\in\![m] \}$ being disjoint sets. 
\end{definition}%
\begin{definition}[\tbd{DAG, complete DAG, prefix-DAG}]\label{def:dag}%
A \emph{directed acyclic graph (over $\Gamma$)}, abbreviated as \emph{DAG},
is a graph over $\Gamma$ that does not contain any directed cycle.
The set of all connected and nonempty DAGs over $\Sigma$ is denoted by $\D_\Sigma$. 
A connected DAG $\dagdef$ is called a \emph{string DAG}
iff $|\IN(v)| \leq 1$ and $|\OUT(v)| \leq 1$ for all vertices $v \in V$.
Throughout this paper,
$\Sigma$ and $N$ being disjoint sets,
$\Sigma$ will denote an \emph{alphabet of terminals},
namely, the \emph{vertex labels},
whereas
$N$ is our \emph{alphabet of nonterminals}
used for labeling vertices and edges temporarily.
We call a DAG over $\Sigma$ a \emph{complete DAG}.
A DAG over $\Sigma \cup N$ is called a \emph{prefix-DAG},
a \emph{proper prefix-DAG} if at least one vertex
is labeled by a nonterminal $n \in N$.
\end{definition}


\begin{definition}[\tbd{Regular DAG grammar, $\Lgram$, $\Lall$ \cite{daggrammar}}]\label{def:grammar}%
A \emph{regular DAG grammar}\footnote{Since regular DAG grammars
are equivalent to regular DAG automata,
an illustrative example of how the DAGs are handeled 
can be found in \cite{DBLP:conf/mol/Drewes17}.} is a triple $\gramdef$.
Each \emph{rule} $r \in R$ is of the form $\alpha \sigmaarrow\beta$
where $\sigma \in \Sigma$ while
the \emph{head} $\alpha$ and the \emph{tail} $\beta$ are elements of $N^*$.
For the prefix DAGs $G$ and $G'$,
there exists a \emph{derivation step} $G\To_rG'$ using a rule $r$
if $G$ contains pairwise distinct vertices $v_1,\dots,v_k$
such that $\lab(v_1 \cdots v_k)=\alpha$.
In that case, $G'$ is obtained from $G$ by
\begin{itemize}
\item
  adding the vertex $v$ with its label $\lab(v)=\sigma$ 
\item
  by letting the edges, formerly pointing to  $v_1,\dots,v_k$,
  now point to $v$, thus $\tar(\IN(v_i)) = v$,
\item
  deleting the temporary vertices $v_1,\dots,v_k$ and, in turn,
\item
  adding the temporary vertices $w_1,\dots,w_j$ 
  labeled by their nonterminals $\lab(w_1\cdots w_j)=\beta$
\item
  by connecting them to the graph with the edges
  $(v, w_1),\dots(v,w_j)$.
\end{itemize}
A \emph{derivation} is a sequence of prefix DAGs%
\footnote{
We extend the notation $\To_{r_1\cdots r_n}$ to $\To_E$,
where $E$ is an (extended) regular expression over rules:
if $L(E)$ denotes the language of sequences of rules denoted by $E$,
then ${\To_E}={\bigcup_{r_1\cdots r_n\in L(E)}\To_{r_1\cdots r_n}}$.}
$G_0\To_{r_1}G_1\To_{r_2}\cdots\To_{r_n}G_n$, 
also denoted by $G_0\To_{r_1\cdots r_n}G_n$.
The set of all these $G_n$ that are complete is denoted by $\Lall$.
The \emph{DAG language generated by $\gram$} is 
$\Lgram = \{G\in\D_\Sigma\mid\emptygraph\To^*_R G\}$,
the set of connected and complete DAGs which the grammar can derive,
where $\To^*_R$ denotes the transitive 
and reflexive closure of ${\To_R}=\bigcup_{r\in R}{\To_r}$.
As usual, a rule is said to be \emph{useless}
if none of the derivations for DAGs in $L(\gram)$ comprises this rule
and \emph{useful} if it does.
%
The DAG grammar $\gram$ is \emph{deterministic} if,
for every pair $\alpha\sigmaArrow{}$ in $N^*\!\times\Sigma$,
there exists at most one $\beta\in N^*$
such that $(\alpha \sigmaarrow\beta)\in R$. 
The DAG language generated by a deterministic DAG grammar, 
as well as a DAG automaton recognizing it,
is called \emph{top-down deterministic}.
By reversing the orientation of the edges,
we obtain its \emph{dual language}.
A language $L(\gram)$ of a deterministic DAG grammar
is called \emph{bottom-up deterministic} if its dual language is generated by $\gram$.

The class $\RDL$ of \emph{regular DAG languages}
consists of all DAG languages generated by a regular DAG grammar (equivalently recognized by a regular DAG automaton).
The class $\RDLdet$ of \emph{regular deterministic DAG languages}
consists of all DAG languages and dual languages 
generated by regular deterministic DAG grammars (equivalently recognized by a regular deterministic DAG automaton). 

\end{definition}


In a derivation of a DAG $G$, at the time a new edge $e$ is generated,
its newly created target vertex $v$ is labeled by a nonterminal, say $q$.
At that time $v$ ``dangles'' at the end of $e$ without further incoming or outgoing edges.
Later rule applications will take $v$,
merge it with other vertices and 
label the resulting vertex
with its final symbol taken from $\Sigma$ according to the rule used.
The edge $e$, however, remains untouched.
We may represent a derivation of $G$ up to reordering of derivation steps by the DAG $G$ itself
together with a labeling of edges by nonterminals.
Then, $e$ would be labeled with $q$. We call this the derivation DAG of $G$.

\begin{definition}[\tbd{Derivation DAG, $\DAG D$}]\label{def:derivationdag}
Let $G_0\To_{r_1\cdots r_n}G_n$ with $r_1, \dots r_n \in R$ be
a derivation of a DAG $G_n = \dagdef$ generated by a DAG grammar $\gramdef$.
Then, the corresponding \emph{derivation DAG} of $G$
is the tuple $D=(V,E,\lab,\IN,\OUT)$,
where $\lab\colon E \cup V\to \Sigma\cup N$ is extended to edges by:
for every edge $e \in E$, $\lab(e)$
is the unique nonterminal $q \in N$ such that,
for some $i \in [n]$, $e$ is an edge of $G_i$ with $\lab(\tar_{G_i}(e)) = q$.
Let $\DAG D$ denote the DAG $G_n$, 
obtained from $D$ by restricting $\lab$ to $V$\!, denoted by $\lab|_V$.
\end{definition}


A derivation DAG is not necessarily connected,
thus $\DAG D \in \Lall$ if $\emptygraph \To^*_R D$
but $\DAG D \in \Lgram$ only if $D$, or equivalently $\DAG D$, is connected.
It should be noted that the set of all derivation DAGs of (DAGs in) $\Lgram$ is easily characterized:
For every such derivation DAG $D = (V,E,\lab,\IN,\OUT)$ the DAG $G = (V,E,\lab|_V,\IN,\OUT)$ is an element of $\D_\Sigma)$
and for every vertex $v\in V$ there is a rule $\alpha \sigmaarrow\beta$
such that $\alpha=\lab(\IN(v))$ and $\beta=\lab(\OUT(v))$.
Thus, the derivation DAGs of $\gram$ coincide with the runs of the DAG automata
in~\cite{journals/iandc/BlumDrewes2019},
and $\Lgram$ is the set of all DAGs $\DAG D$ such that $D$ is a derivation DAG of a DAG generated by $\gram$.
Moreover, a regular DAG grammar $\gram$ without useless rules is deterministic
if and only if every DAG in $\Lgram$ has exactly one derivation DAG.

\begin{definition}[\tbd{Rule Path and Cycle}]\label{def:rule}
\emph{Marking} a symbol $q$
(at a position $i \in [n]$ of a string of length $n$)
by a \emph{mark} $\m ~$ means replacing it with $\m q$.
We mark rules with the \emph{entry mark} $\men ~$
           and the \emph{exit mark} $\mex ~$;
if it is not specified which of those two marks is used, we use~$\m ~$.
A \emph{marked rule} $\m r =(\m\alpha\sigmaarrow\m\beta)$ is obtained from
a rule $r =(\alpha\sigmaarrow\beta)$ by marking two nonterminals
at two distinct positions $i, j \in [|\alpha\beta|]$ in $\alpha\beta$,
one with the entry, one with the exit mark;
in a \emph{weakly} marked rule only at one position
with either entry or exit mark.
Such a \emph{marked nonterminal} is referenced
by its tuple $(\m q, \m r)$ where
$\rmgrule$ is the (weakly) marked rule in which 
$q$ is replaced with $\m q$.
A rule pair $(\m r_i, \m r_j$)
for the (weakly) marked rules $\m r_i$ and $\m r_j$
\emph{agrees on} the marked nonterminals
$(\mex q, \m r_i)$ and $(\men q, \m r_j)$
if $q$ is marked
once in a head and once in a tail in order to obtain
$(\mex q, \m r_i)$ and $(\men q, \m r_j)$.
Two weakly marked rules always agree -- regardless of their marked nonterminals.
A \emph{rule sequence} is a nonempty sequence 
of (weakly) marked rules $\m r_1, \dots , \m r_n$
for which all rule pairs $(r_i , r_{i+1\mod n})$
with $i \in [n]$ agree 
and every marked nonterminal in the rule sequence is exactly once agreed on. 
A rule sequence of marked rules $\m r_1, \dots , \m r_n$
is a \emph{rule cycle}, with 
$\m r_1$ and $\m r_n$ being weakly marked
it is a \emph{rule path}.
A rule path \emph{between} $s$ and $t$
is a rule path with marked nonterminals $\m q_1, \dots ,\m q_{n-1}$ which
yields a path
$\sigma_1,q_1, \dots ,q_{n-1},\sigma_n$ in a derivation DAG
between $s$ and $t$
for $s\in \{q_1, \sigma_1\}$ and $t\in \{q_n, \sigma_n\}$.

\end{definition}

Figures~\ref{fig:rulecyclegraph} and
\ref{pic:bowrulecycle} show examples of rule sequences.
Observe that these definitions permit two types of rule sequences.
A rule sequence yields a directed path in a graph
if and only if both mark types,
exit and entry, do not occur both in heads and tails.
We call this a \emph{directed} rule sequence.
If heads (and consequently tails) comprise both types of marks,
the resulting path in the graph will be undirected.
This is called an \emph{undirected} rule sequence.

\begin{observation}[Yielding Cycle]\label{obs:yieldcycle}
Obviously, a directed rule cycle cannot yield a directed cycle in a DAG,
since DAGs are acyclic.
Therefore, only undirected rule cycles can yield a cycle in a DAG.
Directed rule cycles yield paths only, no cycles, in a DAG,
undirected ones yield both paths and undirected cycles.
\end{observation}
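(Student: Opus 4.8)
The plan is to reduce the whole statement to the acyclicity axiom by reading every clause inside a derivation DAG (Def.~\ref{def:derivationdag}); the only substantial ingredient is the elementary fact that a DAG contains no closed directed walk of positive length. First I would make the informal verb \emph{yield} precise: a rule cycle $\bar r_1,\dots,\bar r_n$ is realized in a derivation DAG $D$ by vertices $v_1,\dots,v_n$ such that $v_i$ is generated by the underlying unmarked rule $r_i$ and, for every consecutive pair (indices read around the cycle), the nonterminal on which that pair agrees labels an edge of $D$ incident with the two corresponding vertices; the realization counts as a \emph{path} or a \emph{cycle} in the sense of Def.~\ref{def:path} according to whether its first and last vertices differ. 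By Def.~\ref{def:derivationdag} the head of a rule enumerates a vertex's incoming edges and the tail its outgoing ones, so the orientation of each such edge is forced: it runs from the vertex whose rule carries the shared nonterminal in its tail to the vertex whose rule carries it in its head. This is precisely the dichotomy noted just before the observation — a \emph{directed} rule sequence, having each mark type confined to heads or confined to tails, realizes a directed walk, whereas an undirected one realizes an undirected walk.

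With that set up, the first two sentences are immediate. If a directed rule cycle were realized as a cycle in some derivation DAG, the realization would be a closed directed walk of positive length; but a derivation DAG is acyclic (Def.~\ref{def:dag}) and has no loops (Def.~\ref{def:graph}), so no such walk exists. Hence no directed rule cycle yields a cycle in a DAG, and, contrapositively, any rule cycle that does yield a cycle in a DAG must carry both mark types in its heads and tails, i.e.\ be undirected. For the final sentence I would handle the two halves separately. A directed rule cycle, in every derivation DAG in which it occurs, traces a directed walk, and a directed walk in an acyclic graph can neither revisit a vertex nor close up, so it is a simple directed path; thus directed rule cycles yield paths and nothing but paths. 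An undirected rule cycle, in turn, becomes an undirected rule \emph{path} as soon as one deletes the agreement of a single consecutive pair and weakly marks the two rules at the break, and that rule path yields an undirected path through the ``rule path between $s$ and $t$'' clause of Def.~\ref{def:rule}; and it can also close up genuinely, which I would witness by a concrete instance such as the undirected rule cycle of Fig.~\ref{pic:bowrulecycle}, realized as an undirected cycle inside a DAG. So the undirected case yields both.

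The main obstacle is definitional hygiene rather than mathematics: one has to pin down once and for all what ``realized as a cycle'' means — a closed walk visiting all $n$ rules, the wrap-around pair $(\bar r_n,\bar r_1)$ included — and then check that the orientation-forcing observation applies uniformly to every consecutive pair, that wrap-around pair among them, and that the undirected witness really lives inside an acyclic graph and not merely as an abstract rule cycle. Once those points are fixed, acyclicity supplies everything else and no further argument is required.
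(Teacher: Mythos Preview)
Your proposal is correct, and in fact it supplies far more than the paper does: the paper offers no proof whatsoever for this observation---it is stated as self-evident, with the word ``Obviously'' doing all the work, relying on the sentence just before it that a directed rule sequence yields a directed path while an undirected one yields an undirected path. Your unpacking of ``yield'' via a realization in a derivation DAG, the orientation-forcing argument from heads and tails, and the appeal to acyclicity are exactly the content the paper leaves implicit; the concrete witness you invoke for the undirected case is also in the right spirit (the star language of Example~\ref{ex:star} would serve just as well and is perhaps cleaner than Fig.~\ref{pic:bowrulecycle}).
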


\begin{theorem}[Infinite Language \cite{journals/iandc/BlumDrewes2019}]\label{theorem:linfiffrc}
The DAG language generated by a \dagformalism{} without useless rules
is infinite iff $R$ contains a rule cycle.
\end{theorem}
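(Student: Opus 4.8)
The statement to prove is Theorem~\ref{theorem:linfiffrc}: the DAG language generated by a DAG grammar without useless rules is infinite iff $R$ contains a rule cycle. Although this is cited from \cite{journals/iandc/BlumDrewes2019}, here is how I would reconstruct the argument. The plan is to prove the two directions separately, using the correspondence between derivations and derivation DAGs (Def.~\ref{def:derivationdag}) together with a pumping-style argument on derivation DAGs.

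\textbf{Direction 1: rule cycle $\Rightarrow$ infinite language.} Suppose $R$ contains a rule cycle $\m r_1,\dots,\m r_n$. Since the grammar has no useless rules, each $\m r_i$ (as an unmarked rule) is used in the derivation DAG of some DAG in $\Lgram$. The first step is to show that a rule cycle can actually be ``inserted'' into a derivation DAG: because a rule cycle yields either a directed path or an undirected cycle in a derivation DAG (Observation~\ref{obs:yieldcycle}), and every marked nonterminal is agreed on exactly once, the cyclic sequence of rules can be spliced into an existing derivation DAG at a vertex matching the appropriate nonterminal, increasing the number of vertices. Concretely, I would take a derivation DAG $D$ of some $G\in\Lgram$ that uses $r_1$, locate the vertex $v$ where $r_1$ is applied, detach the subgraph of outgoing structure, insert the cyclic gadget built from $\m r_1,\dots,\m r_n$ (which reconnects consistently precisely because the rule pairs agree and the marks close up into a cycle), and reattach. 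This produces a strictly larger derivation DAG $D'$ that is still connected and still derives a complete DAG, and the construction can be iterated, yielding DAGs of unbounded size, hence $\Lgram$ is infinite.

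\textbf{Direction 2: infinite language $\Rightarrow$ rule cycle.} Conversely, suppose $R$ has no rule cycle. I would argue that the number of vertices in any derivation DAG of a DAG in $\Lgram$ is bounded. The key is a pumping argument: if some $G\in\Lgram$ has a derivation DAG $D$ with more vertices than $|R|$ times the maximal rule size (or a similar bound), then along some root-to-leaf path in $D$ some rule application pattern repeats in a way that closes into a rule cycle — one extracts from the repeated segment a rule sequence in which the entry/exit marks can be placed so that every marked nonterminal is agreed on exactly once and the sequence is cyclic. The absence of a rule cycle then bounds the size of all derivation DAGs, and since $\Lgram$ is (up to the finitely many labelings) in bijection with a finite set of derivation DAGs, $\Lgram$ is finite. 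Contrapositively, infinitude forces a rule cycle.

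\textbf{Main obstacle.} The delicate point in both directions is the bookkeeping of the marks. In Direction~1, I must verify that splicing the cyclic gadget built from $\m r_1,\dots,\m r_n$ into a derivation DAG genuinely yields a valid derivation DAG — that the in/out-labels at every vertex still match some rule, which hinges on the ``agrees on'' condition and on the fact that each marked nonterminal is used up exactly once so there are no dangling nonterminals after splicing. In Direction~2, the hard part is the reverse: given a sufficiently large derivation DAG, exhibiting an honest rule cycle (a sequence satisfying \emph{all} clauses of Def.~\ref{def:rule}, including that every marked nonterminal is agreed on exactly once) from the repetition forced by pigeonhole. I expect the latter — extracting a combinatorially valid rule cycle from a large derivation DAG — to be the main technical work, since one must be careful that the repeated segment can be ``marked'' consistently into a cycle rather than merely a path.
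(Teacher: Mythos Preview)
The paper does not give its own proof of Theorem~\ref{theorem:linfiffrc}; the result is imported from \cite{journals/iandc/BlumDrewes2019} and used as a black box (for instance in the proofs of Lemma~\ref{theorem:unbound} and Theorem~\ref{theorem:charrule}). There is therefore no proof in the present paper to compare your reconstruction against.

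That said, your plan for Direction~2 contains a genuine gap. You propose to find a long \emph{root-to-leaf} (i.e.\ directed) path in a large derivation DAG and pigeonhole along it. This fails already for the star grammar $\Gstar$ of Example~\ref{ex:star}: its language is infinite, yet every derivation DAG has depth one, so every directed path has a single edge and no repetition is forced. What large connected DAGs of bounded degree do guarantee is a long \emph{undirected} path. Along such a path the sequence of (fully) marked rules at the interior vertices is a rule path in the sense of Def.~\ref{def:rule}; pigeonholing on the finitely many possible marked rules gives indices $i<j$ with $\m r_i=\m r_j$, and the segment $\m r_i,\dots,\m r_{j-1}$ is then a bona fide rule cycle (the cyclic agreement $(\m r_{j-1},\m r_i)$ holds because $(\m r_{j-1},\m r_j)$ did and $\m r_j=\m r_i$). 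So the idea is right, but the path must be undirected; this is exactly what produces the \emph{undirected} rule cycle witnessing infinitude in the star example.

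For Direction~1 your splicing idea is workable, but be aware that the rules in the cycle may carry \emph{unmarked} nonterminals in their heads as well as their tails. Unmarked tail nonterminals leave outgoing dangling edges, which the no-useless-rules hypothesis and Lemma~\ref{theorem:useful} let you complete; unmarked head nonterminals, however, demand additional \emph{incoming} edges at the spliced vertices, and these are not supplied by the top-down prefix-DAG machinery. You will need either to splice into a derivation DAG that already provides those incoming edges, or to argue via the swap construction $D(e\bowtie e)^k$ of Lemma~\ref{theorem:swap} on a suitably chosen non-bridge edge. Either route works, but the sentence ``detach the subgraph of outgoing structure, insert the cyclic gadget, and reattach'' hides precisely this difficulty.
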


\begin{definition}[Swap]
Let $G=(V,E,\lab,\IN,\OUT)$ be a DAG.
Two edges $e_0,e_1\in E$ are \emph{independent} if there is no directed path
between $e_0$ and $e_1$.
In this case, the \emph{edge swap of $e_0$ and $e_1$} is defined and
yields the DAG $G[e_0\bowtie e_1]=(V,E,\lab,\mathit{swap}\circ\IN,\OUT)$ given by the bijection
$\mathit{swap}\colon E\to E$ defined as $\mathit{swap}(e_i)=e_{1-i}$
for $i\in\{0,1\}$ and $\mathit{swap}(e)=e$ for $e\notin\{e_0,e_1\}$.
For $k \in\mathbb N$, let $G_0, G_1, \dots, G_k$
be $k+1$ disjoint isomorphic copies of $G$,
and for $i \in \{ 0, 1, \dots, k \}$,
let $e_i$ and $e'_i$ be the copies of $e$ and $e'$
in $G_i$, respectively.
Then the graph $\swapping Ge{e'}k$ for $k\in\mathbb N$ is defined as
\[\swapping Ge{e'}0 = G_0 
\text{\quad and\quad}
   \swapping Ge{e'}{k} =
  (\swapping Ge{e'}{k-1} \union G_{k})[e_{k-1}' \bowtie e_{k}].
\]
\end{definition}


Swapping two edges means that the tips of the arrows, the edge targets, are exchanged with one another. This swapping operation is, of course only allowed, if the result of the swap is still a DAG. Swapping edges yielding a directed cycle is not defined. The operation is central for regular DAG languages since, after swapping two edges in a DAG that have the same label in one of its derivation DAGs, the swapped result is still part of the language.
\begin{lemma}[\tbd{Swap Preserves Generation \refstandard}]\label{theorem:swap}
Let $\gramdef$ be a regular DAG grammar and
$D = (V,E,\lab,\IN,\OUT)$ a derivation DAG with $\DAG{D} \in \Lall$.
Then, if $\lab(e_0)=\lab(e_1)$ for $e_0, e_1\in E$,
the edge swap of $e_0$ and $e_1$ in $D$, in case it is defined, 
yields a DAG generated by $\gram$\!,
thus $ \DAG{D[e_0\bowtie e_1]} \in \Lall$.
\end{lemma}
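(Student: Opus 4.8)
The plan is to show that the edge swap of $e_0$ and $e_1$ in $D$ produces another valid derivation DAG of $\gram$, which by the characterization of derivation DAGs given just before this lemma immediately yields $\DAG{D[e_0\bowtie e_1]}\in\Lall$. Recall that characterization: a prefix-free requirement is not needed — a DAG $D'=(V',E',\lab',\IN',\OUT')$ is (up to reordering of derivation steps) a derivation DAG of a DAG generated by $\gram$ exactly when $\DAG{D'}\in\D_\Sigma$ together with the condition that for every vertex $v$ there is a rule $\alpha\sigmaarrow\beta\in R$ with $\alpha=\lab'(\IN'(v))$ and $\beta=\lab'(\OUT'(v))$; and then $\DAG{D'}\in\Lall$. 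So the whole proof reduces to two verifications about $D[e_0\bowtie e_1]$: (i) it is still a DAG, i.e. acyclic (connectedness is irrelevant for membership in $\Lall$); and (ii) the local rule condition still holds at every vertex.

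First I would dispatch (ii). The swap only changes the function $\IN$ (precomposing with the bijection $\mathit{swap}$); the vertex set, edge set, labeling $\lab$, and $\OUT$ are untouched. So the only vertices whose incoming-edge string could change are $\tar(e_0)$ and $\tar(e_1)$. For $v=\tar(e_0)$, after the swap its incoming edges are the same multiset of edges except $e_0$ is replaced by $e_1$; but since $\lab(e_0)=\lab(e_1)$ by hypothesis, the string $\lab(\IN(v))$ is unchanged as a labeled string (one has to be slightly careful about the ordering within $\IN(v)$, but $\mathit{swap}$ replaces $e_0$ in place by $e_1$, so positionally the label sequence is identical). Symmetrically for $\tar(e_1)$. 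Hence the rule condition, which was satisfied in $D$, is still satisfied in $D[e_0\bowtie e_1]$ at every vertex. Also $\DAG{D[e_0\bowtie e_1]}=\DAG D\in\Lall$ as a graph over $\Sigma$, so that part of the characterization is free.

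The substantive point is (i), acyclicity, and this is exactly where the independence hypothesis built into the definition of edge swap does the work — the lemma statement says ``in case it is defined'', and by the Swap definition the swap of $e_0,e_1$ is only defined when $e_0$ and $e_1$ are independent, i.e. there is no directed path between them in $D$. I would argue by contradiction: suppose $D[e_0\bowtie e_1]$ contains a directed cycle $C$. Since $D$ itself is acyclic, $C$ must use at least one of the rerouted edges. Write $s_i=\src(e_i)$ and $t_i=\tar(e_i)$ in $D$; after the swap $e_0$ runs $s_0\to t_1$ and $e_1$ runs $s_1\to t_0$ (the sources are unchanged, only targets are swapped). A directed cycle through, say, the rerouted $e_0$ decomposes as: a directed path $P$ from $t_1$ back to $s_0$ using only edges of $D$ other than possibly the rerouted $e_1$. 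If $C$ uses only the rerouted $e_0$ (not $e_1$), then $P$ is a directed path in $D$ from $t_1$ to $s_0$; prepending $e_1$ (which in $D$ goes $s_1\to t_0$ — wrong) — let me instead phrase it cleanly: in $D$, the path $P$ goes $t_1\rightsquigarrow s_0$, and since in $D$ the edge $e_1$ goes $s_1\to t_1$, we get in $D$ a directed path $s_1\to t_1\rightsquigarrow s_0$, and since in $D$ the edge $e_0$ goes $s_0\to t_0$, extending gives a directed path in $D$ from $e_1$ (at $s_1\to t_1$) to $e_0$ (at $s_0\to t_0$) — contradicting independence of $e_0$ and $e_1$. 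The case where $C$ uses the rerouted $e_1$ only is symmetric, and the case where $C$ uses both rerouted edges is handled by cutting $C$ at those two edges into two $D$-paths and again assembling a directed $D$-path witnessing that $e_0$ and $e_1$ are not independent. The main obstacle is organizing this case analysis on how the hypothetical cycle meets $\{e_0,e_1\}$ cleanly enough that each case visibly contradicts the ``no directed path between $e_0$ and $e_1$'' hypothesis; everything else is bookkeeping. Once acyclicity is established, (i) and (ii) together with the characterization of derivation DAGs finish the proof.
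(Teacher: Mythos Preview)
The paper does not supply its own proof of this lemma: it is quoted from \cite{journals/iandc/BlumDrewes2019} (the \texttt{\textbackslash refstandard} tag in the lemma header), so there is nothing in the present paper to compare your argument against. That said, your approach---verify the local rule condition at every vertex and then check acyclicity using the independence hypothesis---is exactly the natural one, and it is essentially how the cited reference argues as well.

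Two small corrections. First, the sentence ``$\DAG{D[e_0\bowtie e_1]}=\DAG D\in\Lall$'' is false as written: stripping edge labels does not undo the swap, so $\DAG{D[e_0\bowtie e_1]}$ is a genuinely different graph from $\DAG D$. What you need (and what your argument actually gives) is only that $\DAG{D[e_0\bowtie e_1]}$ is a complete DAG over~$\Sigma$: the vertex labels are untouched by the swap, and acyclicity is your part~(i). Second, in your Case~3 (the hypothetical cycle uses both rerouted edges) the contradiction is not with independence but directly with acyclicity of~$D$: the segment of the cycle from $t_1$ to $s_1$ lies entirely in $D$, and together with the original edge $e_1\colon s_1\to t_1$ it already closes a directed cycle in $D$. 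So Case~3 is in fact the easiest case, not one requiring any ``assembling''. With these two fixes your proof is complete.
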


Deterministic DAG grammars as defined above are equivalent
to the top-down deterministic DAG automata
in~\cite{journals/iandc/BlumDrewes2019}
and every derivation DAG of a grammar corresponds
one-to-one to a run of the corresponding DAG automaton.
Therefore, all results for top-down deterministic DAG automata
carry over to deterministic DAG grammars.
Refer to~\refstandard{} for the notation concerning DAG automata.
In particular, this holds for the following theorem,
in which a regular DAG grammar is called minimal if it does not contain useless rules
and there is no regular DAG grammar with fewer nonterminals generating the same language.

\begin{theorem}[Minimal Grammar \refstandard]
For every deterministic DAG grammar $\gram$,
a minimal deterministic DAG grammar $\gram'$ with $L(\gram') = L(\gram)$ 
can be computed in polynomial time.
This DAG grammar is unique:
every minimal deterministic DAG grammar that accepts $L(\gram)$
is identical to $\gram'$ up to a bijective renaming of its nonterminals.
\end{theorem}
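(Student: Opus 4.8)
The plan is to transport the classical DFA-minimization argument --- eliminate useless states, then merge indistinguishable ones, and use a Myhill--Nerode-style characterization for uniqueness --- to deterministic DAG grammars, exploiting the observation recalled just above the statement: a deterministic $\gram$ without useless rules is exactly one in which every $G\in\Lgram$ has a unique derivation DAG, so each edge of such a $G$ carries a well-defined nonterminal determined by $G$ alone. As a first, purely preprocessing step I would compute and delete the useless rules in polynomial time by two fixpoint computations: a \emph{productivity} fixpoint marking $q$ productive when some rule $q\sigmaarrow q_1\cdots q_j$ has all $q_i$ productive (base case $q\sigmaarrow\lambda$), refined to keep track of completability to a \emph{connected} complete DAG since connectedness is part of the definition of $\Lgram$; and a \emph{reachability} fixpoint marking the nonterminals that can occur on an edge of a derivation DAG of a DAG in $\Lgram$, propagated downward from the root rules $\lambda\sigmaarrow\beta$ while tracking joint realizability of the small tuples of pending nonterminals that a rule head may fuse (polynomial because heads are bounded by the size of $\gram$). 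Removing the non-useful rules changes neither $\Lgram$ nor determinism.

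Next I would merge states. Working relative to $\Lgram$ rather than to $\gram$ itself --- so that the outcome is grammar-independent --- call two (useful) nonterminals $q,q'$ \emph{indistinguishable}, $q\sim q'$, if for every derivation DAG $D$ of a DAG in $\Lgram$ and every edge $e$ of $D$ with $\lab(e)=q$, changing only $\lab(e)$ to $q'$ again yields a derivation DAG of a DAG in $\Lgram$, and symmetrically. The central claim is that $\sim$ is a congruence: if $q_i\sim q_i'$ position by position, then the unique $\sigma$-rule with head $q_1\cdots q_k$ and the unique $\sigma$-rule with head $q_1'\cdots q_k'$ have $\sim$-equivalent tails; top-down determinism is exactly what makes this well defined. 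I would then form the quotient grammar $\gram'$ whose nonterminals are the $\sim$-classes, with the induced rules, verify that it is still deterministic, and prove $L(\gram')=L(\gram)$ --- one inclusion is immediate, the other because $\sim$ preserves completability to a connected complete DAG. Since $\sim$ is a partition refinement on the finite set $N$, it can be computed in polynomial time in the style of Moore's algorithm, so the whole construction is polynomial.

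For minimality and uniqueness I would argue Myhill--Nerode style. In $\gram'$ every nonterminal is still useful and, by construction, all its nonterminals are pairwise distinguishable; these two properties characterize a deterministic DAG grammar for $\Lgram$ uniquely up to renaming of its nonterminals. Given any deterministic grammar $\gram''$ with $L(\gram'')=\Lgram$ and a nonterminal $p$ of $\gram''$, the family of ``derivation contexts'' --- pointed DAGs $(G,e)$ with $G\in\Lgram$ in which an edge may legitimately carry $p$ --- equals the context family of exactly one $\sim$-class; the assignment $p\mapsto$ that class is well defined because $\gram''$ is deterministic, surjective because every $\sim$-class is witnessed in $\Lgram$, and, when $\gram''$ has no useless rules and all its nonterminals are pairwise distinguishable (i.e.\ when $\gram''$ is minimal), also injective, hence a bijection that transports rules. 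Therefore every deterministic grammar for $\Lgram$ has at least as many nonterminals as $\gram'$, which attains this minimum, and any minimal deterministic grammar for $\Lgram$ is isomorphic to $\gram'$ via a bijective renaming of nonterminals.

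The step I expect to be the real obstacle is pinning down the definitions of $\sim$ and of ``derivation context / pointed DAG'' correctly in the DAG setting and then proving the congruence property. In contrast to strings and trees, the target of an edge can carry several further incoming edges and the edges around a vertex are ordered, so ``the future of a nonterminal'' must encode how that nonterminal can co-occur, at a given ordered position, in a rule head together with other pending nonterminals; moreover the global connectedness requirement in the definition of $\Lgram$ resists a naive ``cut the DAG at $e$'' decomposition. Consequently the bulk of the work will be in showing that quotienting by $\sim$ neither enlarges $\Lgram$ nor destroys determinism, and in making the context equivalence genuinely independent of the grammar, so that the uniqueness conclusion covers \emph{every} minimal grammar for $\Lgram$ and not merely the quotients of $\gram$.
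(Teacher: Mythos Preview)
The paper does not prove this theorem at all: it is quoted verbatim as a known result from~\refstandard{} (Blum and Drewes, 2019), with the remark that results for top-down deterministic DAG automata carry over to deterministic DAG grammars because the two formalisms are equivalent. There is therefore no ``paper's own proof'' to compare your proposal against.

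That said, your sketch is the expected Myhill--Nerode/Moore-style argument and is essentially what the cited source does (for DAG automata rather than grammars). Two remarks on your outline. First, your definition of $q\sim q'$ via relabeling a \emph{single} edge in a derivation DAG is too restrictive to be a congruence as stated: if $q$ occurs in the head of a rule at position $i$, relabeling only one incoming edge to $q'$ will generally not match any rule head, so the relabeled object is never a derivation DAG and the condition becomes vacuous. The workable formulation in the DAG setting is the one you gesture at only afterwards --- an equivalence on nonterminals defined through \emph{contexts} (pointed DAGs with a distinguished edge), where $q\sim q'$ means the two nonterminals are acceptable in exactly the same contexts; the congruence and quotient-preserves-language arguments then go through. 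Second, your worry about connectedness is well placed but handled in~\refstandard{} by working with $\Lall$ first and observing that connectedness is a property of the underlying unlabeled DAG, hence preserved under any relabeling of edges; you do not need to bake connectedness into productivity/reachability separately.
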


\section{Meta-State}

Classical finite state automata recognize string languages by a finite memory,
its set of states.
In every step the automaton memorizes exactly one state.
On the contrary, while generating a DAG,
every derivation step of a DAG grammar
has to recall several nonterminals and not just one.
If we summarize those nonterminals to one meta-state per derivation step,
can we then accept DAGs with a finite state automaton? If so,
which kind of DAG languages can the finite state automaton recognize?

\begin{definition}[\tbd{Meta-state}]\label{def:metastate}
A multiset over nonterminals, i.e. an element of $\mathbb N^N$\!,
is called a \emph{meta-state}. 
The symbol $\metaQ$ is used for sets of meta-states,
thus $\metaQ \in 2^{\mathbb N^N}$\!.
For a DAG $\dagdef$ over $\Sigma\cup N$
we let $\meta G$ denote the meta-state $\lab_M(\{v\in V\mid \lab(v)\in N\})$,
the multiset of all labels which are nonterminals.\footnote{Note that,
according to the \emph{Notation} section,
the function $\lab_M$ returns a \emph{multiset} of labels.}
\end{definition}

Observe that 
it only depends on the meta-state of a graph, not on the graph as a whole,
how the derivation can proceed. For this we define the notion of a graph being useful.

\begin{definition}[Useful]\label{def:useful}
We say that a prefix DAG $G$
is \emph{useful} with respect to a given grammar $\gram$,
if this grammar $\gram$ derives $\emptygraph\To^*_RG\To^*_RG'$
with the DAG $G'$ being complete
and \emph{language-useful} 
if $G'$ is complete and connected, thus equivalently,
if $G$ occurs in a derivation of a DAG $G' \in \Lgram$.
\end{definition}

Note that a prefix DAG with respect to a grammar is useful 
if and only if all its connected components are language-useful.
Which properties depend on the meta-state only?

\begin{theoremEnd}[end, restate,text link=]{lemma}[\tbd{Meta-state dependent}]\label{theorem:metastate}
Let $G, G'$ be prefix DAGs with $\meta{G} = \meta{G}'$
derived by a DAG grammar $\gramdef$.
Then, $\gram$ can apply the rule $r \in R$ as the next derivation step
$G\To_rH$ iff $G'\To_rH'$.
Similarly, $G$ is useful if and only if $G'$ is useful.
However, if $G$ is language-useful this only implies that $G'$ is useful.
\end{theoremEnd}
\begin{proofEnd}
By the definition of a DAG grammar,
a rule application depends only on the temporary vertices labeled by non-terminals --
elements of the meta-states $\meta{G}$ and $\meta{G}'$.
Consequently, $G\To_rH$ is possible iff $G'\To_rH'$ is.
Valuability requires a rule sequence
and is therefore also dependent on the meta-state only.
It follows that $G$ is useful if and only if $G'$ is useful.
On the other hand, a common meta-state does not preserve connectivity.
With the given equality $\meta{G} = \meta{G}'$,
$G$ could be a connected DAG while $G'$ is not.
In that case $G$ is language-useful, but $G'$ not necessarily 
-- only if the preceding derivation connects the unconnected components of $G'$.
But since the language-useful $G$ is useful, $G'$ is, too.
\end{proofEnd}

Interestingly, a minimal grammar can finalize every derivation to a complete DAG.

\begin{theoremEnd}[text link=, end, restate]{lemma}[\tbd{Useful Prefix DAG}]\label{theorem:useful}
Let $G$ be a prefix DAG and $\gram = (N, \Sigma, R)$ be a DAG grammar without useless rules.
If $\emptygraph \To^*_R G$, then $G$ is \emph{useful with respect to} $\gram$,
i.e.\ there exists a derivation $\emptygraph \To^*_R G \To^*_R G'$ for a complete DAG $G'$.
If $G'$ is connected it is \emph{language-useful}.
\end{theoremEnd}

\begin{proofEnd}
We prove this by induction on the length $n$ of the derivation $\emptygraph \To_{r_1 \dots r_n} G$.
For the base case consider a derivation $\emptygraph \To_r G$ of length $n=1$.
As the rule set $R$ contains no useless rules,
there is a derivation $\emptygraph \To_{r'_1 \dots r'_k} G'$
where $G' \in D_\Sigma$ and $r = r'_i$ for some $i$.
Moreover, as $\emptygraph \To_r G$,
the head of the rule $r$ is the empty string and we may assume that $r = r'_1$.
This means that $G$ is useful with respect to $\gram$
(for the sake of brevity, we say useful subsequently).

Consider now a derivation $\emptygraph \To_{r_1 \dots r_n} G$ of length $n > 1$.
Let $\rho_1 = r_1 \dots r_{n-1}$.
By the induction hypothesis we know that there is a sequence $\rho_2$ of rules such that
$\emptygraph \To_{\rho_1 \rho_2} H$ for some $H \in D_\Sigma$.
Moreover, as $R$ contains no useless rules, there are rule sequences $\rho'_1, \rho'_2$,
such that $\emptygraph \To_{\rho'_1 r_n \rho'_2} H'$ for some $H' \in D_\Sigma$.
It is also possible to concatenate these two derivations and to interleave the individual derivation steps.
This yields a derivation $\emptygraph \To_{\rho_1 \rho'_1 r_n} H^* \To_{\rho_2 \rho'_2} (H \& H')$
where $H \& H'$ denotes the disjoint union of $H$ and $H'$.

Consider now the derivation $\emptygraph \To_{\rho_1 r_n} G$.
We want to show that $G$ is useful. 
As $\rho'_1$ can be applied on the empty graph, there is some DAG $G^*$
such that $\emptygraph \To_{\rho_1 r_n} G \To_{\rho'_1} G^*$.
Moreover, $G^*$ and $H^*$ have the same meta-state,
as they were generated through the same multiset of rules.
As $H^*$ is useful
it follows from Lemma~\ref{theorem:metastate},
that $G^*$ is useful, and therefore, $G$ is useful.

Therefore, if $G$ is connected,
so is the connected component $G''$ in the complete DAG $G'$
which finalized the prefix DAG $G$.
This means that $\gram$ can decide to choose only those derivation steps which yield a connected DAG $G' = G''$.
Consequently, a connected prefix DAG $G$
is not only useful
but also language-useful
with respect to its grammar $\gram$.
\end{proofEnd}

After having defined the notion of a meta-state, let us use them for derivations.
The first naive idea is to use all meta-states which occur in all the derivations for complete graphs.
We call that set $\metaQzero$ because a derivation of a DAG in the language
starts and ends with zero states. 
Apart from $\metaQzero$ also another set of meta-states is of interest.
While $\metaqzero$ incorporates all meta-states occurring during 
all derivations of DAGs in $L(\gram)$,
not all of these meta-states may be needed to generate the language $L(\gram)$.
This observation gives rise to a smaller set of meta-states $\metaQmin$. 

\begin{definition}[\tbd{$\metaQzero$ and $\metaQmin$}]\label{def:qmin}
Let $\gram$ be DAG grammar.
The set of all meta-states that occur in derivations of DAGs in $L(\gram)$ 
is denoted by $\metaqzero$ with
\[
\metaqzero=\{\meta G\mid\text{$G$ is a DAG which is language-useful with respect to $\gram$.}\}.
\]

A minimal set of meta-states is denoted by $\metaQmin$.
And, $\metaqmin$ denotes any set of meta-states such that
\begin{enumerate}
\item every DAG $G_n\in L(\gram)$ has a derivation
$\emptygraph\To_{r_1}G_1\To_{r_2}\dots\To_{r_n}G_n$
\\
such that $\meta{G_1},\dots,\meta{G_n}\in\metaqmin$, and
\item there is no set of meta-states of smaller cardinality with this property.

\end{enumerate}
\end{definition}

Thus, $\metaqmin$ is a minimal set of meta-states sufficient to generate a DAG language, while 
$\metaqzero$ incorporates also meta-states that could be dispensed.
The set $\metaqmin$ is not necessarily unique since often several derivations exist for one DAG,
and furthermore a permutation of derivation steps may result in different meta-states. 
In general, we are interested in $|\metaqmin|$, and in particular its finiteness, rather than in the set itself.

The subsequent example illustrates the existence of DAG grammars
for which $\metaqmin$ is finite while $\metaqzero$ is not.
\begin{example}[\tbd{DAG language of stars}]\label{ex:star}
\begin{figure}[t]
\hfill
\subfloat[DAG in $L(\Gstar)$]{
  \label{pic:star}
  \tikzset{
shape example/.style= 
{
    line width = 0mm,
    inner xsep = 2.5cm,
    inner ysep = 0.5cm}
}
\begin{tikzpicture}
\pgfsetshortenstart{1pt}
\pgfsetshortenend{1pt}
\node[name=s, shape=star, star points=6, star point ratio=.65, shape example, inner sep=9mm] {};
\foreach \x in {1,...,6}
{
    \node [vertex] at (s.inner point \x) {};
    \node [vertex] at (s.outer point \intcalcInc{\intcalcMod{\x}{6}}) {};
    \draw (s.inner point \x) edge [->] (s.outer point \x);
    \draw (s.inner point \x) edge [->]  (s.outer point \intcalcInc{\intcalcMod{\x}{6}});
}
\end{tikzpicture}
  ~\\
}
\hfill
\subfloat[FSA for $\Gstar$]{
  \label{pic:astar}
  \begin{tikzpicture}[->,node distance=3cm,
    ever state/.style={inner sep=4cm},
    initial text=]
\node[initial,state,accepting] (0) {$\emptyset$};
\node[state] (r) [right of=0] {$qp$};
\node[font=\scriptsize,state, align=center] (rr) [right of=r,text width=3mm] {$qq$\\$pp$};

\path
( 0) edge [bend left]  node   [above]      { \starrone}         ( r)
( r) edge [bend left]  node   [above]      { \starrone}         (rr)
(rr) edge [bend left]  node   [below]      { \starrtwo}         ( r)
( r) edge [bend left]  node   [below]      { \starrtwo}         ( 0)
%
%
%
;
\end{tikzpicture}
  ~\\
}
\hfill
~
~\\
\captionsetup{subrefformat=parens}
\caption{The grammar $\Gstar$ gives rise to an FSA \subref{pic:astar}
that accepts DAGs like \subref{pic:star}
(\emph{labels are omitted}).}
\label{fig:star}
\end{figure}

Consider a DAG grammar $\Gstar$ with the rules
$\rr = \lambda \rarrow{r} qp$ and $\rl = pq \rarrow{l} \lambda $.
Let $G \in L(\Gstar)$,
cf.~Fig.~\ref{pic:star}.
First, $\emptygraph\To_{\rr^n} \Groot \To_{\rl^n} G$ for $n\ge 1$
generates a graph $\Groot$ consisting of $n$ roots labeled $\mathtt r$.
Subsequently, $\rl$
fuses pairs of nonterminal vertices into a single leaf labeled $\mathtt l$.
Collecting the meta-states that occur in these derivations
or in an arbitrary derivation $\emptygraph\ToR G$ both result in
$\metaQzero(\Gstar) = \{p^nq^n \mid n \in \mathbb{N}\}$,
since every rule either consumes or produces both a pair of nonterminals $q$ and $p$.
However, by first generating a single root, then alternating between $\rr$ and $\rl$,
and finally applying $\rl$ once more, $\Gstar$ offers the derivations $\emptygraph \To_{\rr(\rr\rl)^*\rl}\!\!G$
whose largest meta-state is $p^2q^2$.
Hence, $\metaQmin(\Gstar)=\{ pq, p^2q^2\}$.
\end{example}

The previous example gives rise to the following observation.
\begin{observation}\label{both}%
DAG Grammars $\gram$ with finite $\metaqmin$ but infinite $\metaqzero$ exist.
\end{observation}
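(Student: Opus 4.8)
The plan is to exhibit a concrete witness, namely the grammar $\Gstar$ of Example~\ref{ex:star}, and to verify that $\metaQzero(\Gstar)$ is infinite while $\metaQmin(\Gstar)$ is finite. For infiniteness, observe that each application of $\rr$ adds one $p$ and one $q$ to the current meta-state and each application of $\rl$ removes one of each, so a trivial induction on derivation length shows that every meta-state occurring in a derivation of $\Gstar$ has the form $p^{k}q^{k}$. Conversely, for every $n\ge 1$ the prefix DAG $\Groot$ with $\emptygraph\To_{\rr^{n}}\Groot$ has meta-state $p^{n}q^{n}$, and $\Groot$ is language-useful: applying $\rl^{n}$ with the matching of its $n$ pending $p$-vertices and $n$ pending $q$-vertices that closes up the $2n$-cycle depicted in Fig.~\ref{pic:star} yields a connected complete DAG. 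Hence $\metaQzero(\Gstar)=\{p^{n}q^{n}\mid n\in\mathbb N\}$, which is infinite; for the observation the inclusion ``$\supseteq$'' alone already suffices.

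For finiteness of $\metaQmin(\Gstar)$ one first checks that every $G\in L(\Gstar)$ is, up to isomorphism, the connected DAG $\Gn$ formed by $n$ roots and $n$ leaves strung along a $2n$-cycle, for some $n\ge 1$ (the numbers of $\rr$- and of $\rl$-applications must coincide, whence all roots have out-degree $2$ and all leaves in-degree $2$, and a connected loop-free graph in which every vertex has degree $2$ is a cycle, which here is even because it is bipartite). Instead of deriving $\Gn$ by the breadth-first derivation $\emptygraph\To_{\rr^{n}}\Groot\To_{\rl^{n}}\Gn$, I would use the interleaved derivation $\emptygraph\To_{\rr(\rr\rl)^{n-1}\rl}\Gn$: create two roots, close one leaf, then repeatedly create one root and close one leaf, and finally close the two remaining leaves. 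Along this derivation the meta-states are $pq,\ p^{2}q^{2},\ pq,\ p^{2}q^{2},\ \dots,\ p^{2}q^{2},\ pq$ and then $\nullconfig$, all contained in the finite set $\{pq,p^{2}q^{2}\}$. Since every DAG in $L(\Gstar)$ admits such a derivation, $\metaQmin(\Gstar)$ is finite (indeed $\metaQmin(\Gstar)=\{pq,p^{2}q^{2}\}$, as claimed in Example~\ref{ex:star}), and together with the previous paragraph this proves the observation.

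The step requiring the most care is the claim that the reordered derivation $\rr(\rr\rl)^{n-1}\rl$ produces exactly the same DAG $\Gn$ as the canonical one, i.e.\ that permuting derivation steps consistently with the rule dependencies does not change the generated DAG. This follows from the discussion after Definition~\ref{def:derivationdag}: the DAGs generated by $\gram$ are precisely the underlying DAGs $\DAG D$ of its derivation DAGs $D$, and $\DAG D$ depends only on $D$ — equivalently, on the multiset of rule applications together with the matching of the created nonterminal vertices — and not on the order in which the vertices of $D$ are produced; concretely, one realizes the interleaving $\rr(\rr\rl)^{n-1}\rl$ by taking a suitable linear extension of the natural partial order on the vertices of the (unique) derivation DAG of $\Gn$, in the same spirit as the reordering argument used in the proof of Lemma~\ref{theorem:useful}. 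The remaining checks — that $\rl$ is applicable exactly when the meta-state is $pq$ or $p^{2}q^{2}$, and that the prescribed matching indeed yields the $2n$-cycle $\Gn$ — are routine.
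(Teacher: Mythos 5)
Your proposal is correct and takes essentially the same route as the paper: the observation is justified there by Example~\ref{ex:star}, i.e., by the grammar $\Gstar$ with $\metaQzero(\Gstar)=\{p^nq^n\mid n\in\mathbb N\}$ infinite and the interleaved derivations $\emptygraph\To_{\rr(\rr\rl)^*\rl}G$ witnessing $\metaQmin(\Gstar)=\{pq,p^2q^2\}$. You merely supply more detail than the paper does (the characterization of $L(\Gstar)$ as $2n$-cycles, the language-usefulness of $\Groot$, and the reordering justification), all of which checks out.
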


This brings us to a further investigation of the finiteness of $\metaQzero$ and $\metaQmin$.

\section{Finite Number of Meta-States}

The previous section showed that languages generated with finite $\metaQmin$ indeed exist.
This section investigates which types of DAG languages can be generated
with a finite number of meta-states induced by the rules of a minimal deterministic grammar.
First, the newly identified language class deserves a name.
We call it \emph{finite induced meta-state DAG language}.
The term \emph{induced} is chosen since the grammar $\gram$ induces this set $\metaQ$ by
a suitable (c.f.~Lemma~\ref{theorem:both})
or all     (c.f.~Lemma~\ref{theorem:string-finite}) derivation DAGs.

\begin{definition}[\tbd{Finite induced meta-state DAG language ($\FID$)}]
\label{def:fidl}
A language recognized by a minimal deterministic grammar $\gram$
with finite $\metaqmin$ is called
a \emph{finite induced meta-state DAG language}.
The language class comprising finite induced meta-state DAG languages is denoted by $\FID$.
\end{definition}

In the following, we look into different categories for finite sets of meta-states.
For each category we check three types, first languages, second paths and finally rule cycles.
\begin{lemma}[\tbd{Finite $\Lgram$ -- finite $\metaQzero$ and $\metaQmin$}]\label{theorem:finite}
\begin{enumerate}
Let $\gramdef$ be a DAG grammar.\\
\item 
If $L(\gram)$ is finite,
the sets $\metaqzero$ and $\metaqmin$ are finite as well.
\item\label{fin:item:path}
For a rule path $\Pi$ of finite length within $R$,
whose marked rules comprise only marked nonterminals
except for the weak marked rules,
finitely many meta-states suffice for all derivation sequences
which generate the corresponding path $\pi$ of $\Pi$ in a DAG.
\end{enumerate}
\end{lemma}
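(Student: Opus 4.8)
The plan is to prove the two items separately, both by bounding the size of the meta-states that can occur.

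For item 1, I would argue by contraposition of Theorem \ref{theorem:linfiffrc}: if $\metaqzero$ is infinite, then arbitrarily large meta-states occur in derivations of DAGs in $L(\gram)$, and I want to conclude that $R$ contains a rule cycle, hence $L(\gram)$ is infinite (after first discarding useless rules, which changes neither $L(\gram)$ nor the relevant meta-states). Concretely, fix a bound $b$ on the size of heads and tails of rules in $R$; then a single derivation step changes $|\meta G|$ by at most $b$. So if meta-states of unbounded size occur, we get derivations $\emptygraph \To^*_R G$ with $G$ language-useful and $|\meta G|$ arbitrarily large, hence derivations of arbitrarily many steps, hence (by pigeonhole on the finitely many rules, and a standard pumping-style argument tracking an individual nonterminal-labeled vertex through the derivation) a rule cycle in $R$. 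Since $\metaqmin \subseteq \metaqzero$ up to the choice of witnessing derivations, and in any case $|\metaqmin| \le |\metaqzero|$ whenever $\metaqzero$ is finite, finiteness of $\metaqzero$ gives finiteness of $\metaqmin$. I should be a little careful here: $\metaqmin$ is defined as a set of minimum cardinality with the covering property, and a priori its cardinality is at most $|\metaqzero|$ because $\metaqzero$ itself has the covering property; so once $\metaqzero$ is finite, so is $\metaqmin$.

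For item 2, the point is that a rule path $\Pi = \m r_1, \dots, \m r_n$ of finite length $n$ can only contribute a bounded number of "pending" nonterminals to the meta-state along any derivation that realizes the corresponding path $\pi$ in the derivation DAG. I would argue as follows: each rule $\m r_i$ in $\Pi$ is applied exactly once when generating $\pi$, and the nonterminals it introduces into the meta-state (the tail nonterminals other than the marked one that is immediately consumed by $\m r_{i+1}$) number at most $b-1$ where $b$ again bounds rule size; since the marked nonterminal of $\m r_i$ is consumed by $\m r_{i+1}$ at the very next step along the path, a derivation that fires $\m r_1, \dots, \m r_n$ consecutively keeps the meta-state contribution of $\Pi$ bounded by $(n-1)(b-1) + b$, a finite number depending only on $\Pi$ and $\gram$. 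Hence only finitely many meta-states can arise, namely the multisets over $N$ of size at most this bound, when one restricts attention to derivation sequences generating $\pi$ in the prescribed order.

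The main obstacle I anticipate is the bookkeeping in item 2: a derivation sequence realizing the path $\pi$ need not fire $\m r_1, \dots, \m r_n$ in immediate succession — other rule applications (generating the rest of the DAG, or even other parts of the same path) may be interleaved, and in principle many partially-completed copies of the path could coexist. The claim is really about \emph{those} derivation sequences whose job is to generate the one path $\pi$, so I would make precise (as the definition of "rule path between $s$ and $t$" in Def.~\ref{def:rule} already suggests) that we count meta-states arising in the sub-derivation responsible for $\pi$, and show that reordering the derivation steps so that the $n$ relevant steps are consecutive does not lose any reachable DAG — this uses the meta-state-dependence Lemma~\ref{theorem:metastate} to permute derivation steps freely. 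Once the reordering is justified, the bound above is immediate; the rest is routine.
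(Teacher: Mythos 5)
Your proposal is correct, and for item 2 it follows essentially the paper's route: the paper likewise observes that, because every interior nonterminal of $\Pi$ is marked, the $n$ rule applications realizing $\pi$ need not be interleaved with other rules, so the resulting finite derivation $G \ToPi G'$ contributes only finitely many meta-states; your explicit bound $(n-1)(b-1)+b$ and the appeal to Lemma~\ref{theorem:metastate} to justify permuting the derivation steps merely make precise what the paper states in one line. For item 1 you take a genuinely different, contrapositive route: an infinite $\metaqzero$ forces unboundedly large meta-states, hence unboundedly long derivations, hence (via a pumping argument and Theorem~\ref{theorem:linfiffrc}) a rule cycle and an infinite language. The paper argues directly instead: a finite $\Lgram$ admits only finitely many derivations --- each derivation of $G'\in\Lgram$ has exactly as many steps as $G'$ has vertices, with finitely many choices per step --- so only finitely many meta-states can occur at all. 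The direct count is more elementary; your detour through rule cycles effectively re-proves the harder direction of Theorem~\ref{theorem:linfiffrc}, when the observation you already make --- that each derivation step adds exactly one terminal vertex, so unbounded meta-states force unbounded DAGs in $\Lgram$ --- would close the contradiction without mentioning rule cycles. Your explicit remark that $|\metaqmin|\le|\metaqzero|$ because $\metaqzero$ itself has the covering property of Definition~\ref{def:qmin} is exactly the step the paper leaves implicit, and is worth keeping.
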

\begin{proof}
Since $N$ is finite and $L(\gram)$ is finite, only a finite number of derivations for all DAGs in $L$ exists.
Obviously, this combination yields a finite number of meta-states for $\metaqzero$ and consequently also for $\metaqmin$.
In the second statement, the finite rule path $\Pi$ 
gives rise to a finite derivation $G \ToPi G'$ for all possible prefix DAGs $G$ of $G'$,
since, due to all nonterminals marked except start and end,
$\Pi$ does not need to be interleaved with rules.
A finite derivation yields finitely many meta-states, both for $\metaQmin$ as well as for $\metaQzero$.
\end{proof}

Infinite languages $\Lgram$ do not necessarily induce
an infinite $\metaQmin$ and not even an infinite $\metaQzero$.
First, we look at those cases, where they indeed induce finite sets of meta-states only.
\begin{lemma}[\tbd{Strings -- finite $\metaQzero$ and $\metaQmin$}]\label{theorem:string-finite}
\begin{enumerate}
Let $\gramdef$ be a DAG grammar. 
\item\label{string:item:lang}
If $L(\gram)$ is a string DAG language
the sets $\metaQzero(\gram)$ and  $\metaqmin$ are finite
\item\label{string:item:path}
For a, possibly arbitrary long, directed rule path, $\Pi$ in $R$,
whose marked rules comprise only marked nonterminals except for the weak marked rules,
finitely many meta-states suffice for 
all derivation sequences which generate the corresponding path $\pi$ by $\Pi$ in a DAG.
\item\label{string:item:cycle}
For a directed rule cycle $c$
whose marked rules comprise only marked nonterminals,
finitely many meta-states suffice for all derivation sequences
which generate the corresponding path $\pi$ by $c$ in a DAG.
\end{enumerate}
\end{lemma}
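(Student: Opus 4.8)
The plan is to prove the three items of Lemma~\ref{theorem:string-finite} by bounding, in each scenario, the number of nonterminal vertices that can simultaneously ``dangle'' during a suitable derivation; once that number is bounded by some constant $b$, the set of reachable meta-states is a subset of $\{\mu \in \mathbb N^N \mid |\mu| \le b\}$, which is finite since $N$ is finite. So the real content is: find a derivation order that keeps the dangling frontier small.

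For item~\ref{string:item:lang}, suppose $L(\gram)$ is a string DAG language, so every DAG in $L(\gram)$ has all vertices of in-degree and out-degree at most $1$, i.e.\ each connected complete DAG is a simple directed path. I would argue that in a minimal deterministic grammar for such a language every useful rule has $|\alpha| \le 1$ and $|\beta| \le 1$: a rule with $|\beta| \ge 2$ would create a vertex of out-degree $\ge 2$, and by Lemma~\ref{theorem:useful} that prefix DAG is language-useful, contradicting the string property of the final DAG (similarly for $|\alpha| \ge 2$ using in-degree). Hence every rule consumes at most one nonterminal vertex and produces at most one. Now take any DAG $G_n \in L(\gram)$ and generate it by following the path from its root to its leaf, applying at each step the unique rule matching the current single dangling nonterminal; at every intermediate prefix DAG exactly one vertex carries a nonterminal, so every meta-state has size $\le 1$. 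This gives $|\metaqmin| \le |N| + 1$. For $\metaqzero$ one observes that by the same rule-shape argument, \emph{every} derivation of \emph{every} language-useful prefix DAG has all meta-states of size $\le 1$ (each rule changes the count of dangling nonterminals by at most $0$, starting from $0$ after each root-rule and a root-rule has empty head), so $\metaqzero$ is finite too; here the subtlety is that a prefix DAG may be a disjoint union of several partial paths, but each component still contributes at most one dangling nonterminal only if we are careful — actually a component can be a single partial path with its one open end, so the total frontier equals the number of components, which need \emph{not} be bounded. Hence for $\metaqzero$ I would instead invoke the precise statement of Def.~\ref{def:qmin}: $\metaqzero$ collects $\meta G$ for language-useful $G$, and a language-useful prefix DAG of a string DAG is \emph{connected}, so it is a single partial path with one open end, giving $|\meta G| \le 1$ again.

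For items~\ref{string:item:path} and~\ref{string:item:cycle}, the key observation (Observation~\ref{obs:yieldcycle}) is that a \emph{directed} rule path or rule cycle yields a directed path $\pi = \sigma_1 q_1 \sigma_2 q_2 \cdots$ in the derivation DAG, traversed in a single direction. The decisive structural fact is that in a directed rule sequence all interior marked nonterminals are matched within the sequence, so — as already used in Lemma~\ref{theorem:finite}\eqref{fin:item:path} — the rules of $\Pi$ (resp.\ $c$) need not be interleaved with any other rules to produce $\pi$: one can apply $\m r_1, \m r_2, \dots$ consecutively. Along such a consecutive application the ``moving token'' is a single dangling nonterminal: after applying $\m r_i$, the newly created interior vertex carries exactly the nonterminal that $\m r_{i+1}$ will consume, plus whatever \emph{side} nonterminals the tails of the $\m r_i$ produce. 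The point to nail down is that the side nonterminals produced by each step are immediately ``parked'' and never touched again by $\Pi$, but they still accumulate: if the tails $\beta_i$ each contribute side nonterminals, the frontier grows linearly in the length of $\Pi$, which for item~\ref{string:item:path} with ``arbitrary long'' $\Pi$ and for item~\ref{string:item:cycle} (iterating $c$) would be unbounded. So the real argument must be that in these particular situations the side contributions vanish: for a rule path/cycle generating a path $\pi$, I claim each marked rule $\m r_i = (\m\alpha_i \sigmaarrow \m\beta_i)$ with the ``only marked nonterminals'' hypothesis has $|\alpha_i| \le 1$ on the unmarked side and $|\beta_i| \le 1$ on the unmarked side when the sequence is directed — i.e.\ heads and tails contain \emph{no} unmarked nonterminals beyond the one carrying the path — because an unmarked nonterminal in a tail would be a dangling vertex that is never resolved by $\Pi$ and hence (after finalizing via Lemma~\ref{theorem:useful}) would have to be resolved by rules outside $\Pi$, which is fine for a single traversal but means these extra nonterminals are \emph{not} counted against $\Pi$'s meta-states if we only track the derivation sequence generating $\pi$ itself. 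Concretely: the statement says ``finitely many meta-states suffice for all derivation sequences which generate the corresponding path $\pi$'' — I read this as: there is a derivation (segment) producing $\pi$ all of whose meta-states lie in a fixed finite set. Taking the consecutive application $\m r_1, \dots, \m r_n$ and, in the cyclic case, noting that after one full traversal of $c$ the frontier returns to its starting multiset (the exit mark of $\m r_n$ matches the entry mark of $\m r_1$), the frontier is periodic and hence ranges over finitely many multisets; their sizes are bounded by $n \cdot \max_i |\alpha_i \beta_i|$, a constant independent of how many times $c$ is iterated. That periodicity-under-iteration is the crux: one traversal of the directed rule cycle is a ``closed'' operation on the meta-state carried along $\pi$, so iterating it does not grow the relevant meta-states.

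The step I expect to be the main obstacle is exactly this last one: making precise what ``meta-states suffice for the derivation sequences generating $\pi$'' means when the same rules, applied in a DAG that is \emph{not} just the bare path $\pi$, could be forced to carry along large multisets of unrelated dangling nonterminals. The resolution is to decouple the derivation of $\pi$ from the rest of the DAG — using the non-interleaving property of directed rule sequences (all interior marks matched internally) to argue that the subderivation realizing $\pi$ can be performed as a contiguous block, and within that block the frontier contributed by the block itself is bounded by a constant (for a path of bounded length trivially; for a cycle, by periodicity over one period times the period length, uniformly over the number of iterations). I would also need to double-check the degenerate cases — a directed rule cycle of length one, or a rule path whose weakly marked endpoints carry the only marks — to confirm the frontier bound still holds, and to reconcile the ``arbitrary long'' phrasing in item~\ref{string:item:path} (where I suspect the intended reading is that the \emph{cycle-free} part is long because it repeats a sub-pattern, so boundedness again follows from periodicity rather than from the length itself).
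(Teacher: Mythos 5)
Your proposal is correct and follows essentially the same route as the paper's proof: item~\ref{string:item:lang} via the observation that every useful rule of a string-DAG grammar consumes and produces at most one nonterminal (so meta-states have size at most one), and items~\ref{string:item:path}--\ref{string:item:cycle} via the observation that a directed, fully marked rule has the form $\men q\sigmaarrow\mex p$ and therefore preserves the size of the frontier, giving finitely many meta-states independently of the length or number of iterations. You are in fact more explicit than the paper on two points it glosses over --- the connectivity of language-useful prefix DAGs (needed for $\metaQzero$) and the absence of unmarked ``side'' nonterminals in the fully marked rules --- but the underlying argument is the same.
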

\begin{proof}
Since string languages consume and produce exactly one nonterminal in every derivation step
except for the first and last steps producing and consuming one nonterminal only, respectively,
$N$ equals $\metaqzero$.%
\footnote{Since we use strings as a notation for multisets, this special set of multisets equals a normal set.}
Obviously, this means that $\metaqzero$ is finite, since $N$ is finite.

This carries over to subgraphs which are string DAGs.
Strings as subgraphs are derived in the statements (\ref{string:item:path}) and (\ref{string:item:cycle}).
Let $\m r$ denote a marked rule with only marked nonterminals of the form $\men q\sigmaarrow \mex p$.
Such a rule $\m r$ does not alter the size of the meta-states,
thus $|\meta G| = |\meta G'|$,
in its derivation step $G \Tor G'$.
For a directed rule cycle $c$ with only marked nonterminals holds the same
just as for an infinite rule path $\Pi$ in $R$,
since they consist of a rule sequence with rules of type $\m r$.
Such a rule path $\Pi$ comprises only finitely many marked rules since $R$ is finite.
Thus, an arbitrary long rule path corresponds to a rule cycle.
Consequently, $\metaQzero$ is finite for both (\ref{string:item:path}) and (\ref{string:item:cycle}).


In all three cases $\metaQzero$ is finite, yielding a finite $\metaQmin$.
\end{proof}

In summary, for strings of any kind, string languages or strings as subgraphs, also of unbounded length,
both $\metaQmin$ and $\metaQzero$ stay finite. This was to be expected since regular string languages
are accepted by finite state automata with a finite set of states.
What happens beyond string DAGs?
\begin{lemma}[\tbd{Finite $\metaQmin$ -- infinite $\metaQzero$}]\label{theorem:both}

\begin{enumerate}
Let $\gramdef$ be a minimal deterministic DAG grammar 
\item In the case where $\metaqmin$ is finite, it is possible that $\metaqzero$ is infinite. 
\item\label{both:item:path}
For a possibly arbitrary long, undirected rule path, $\Pi$ in $R$,
whose marked rules comprise only marked nonterminals except for the weak marked rules,
finitely many meta-states suffice for deriving all complete DAGs
incorporating the corresponding path $\pi$ generated by $\Pi$,
however, some derivations for each such complete DAG ask for an infinite set of meta-states.
\item\label{both:item:cycle}
For an undirected rule cycle $c$ in $R$,
whose marked rules comprise only marked nonterminals,
finitely many meta-states suffice for deriving all complete DAGs
incorporating the corresponding path $\pi$ of unbounded length generated by $c$.
However, some derivations for each such complete DAG ask for an infinite set of meta-states.
\end{enumerate}
\end{lemma}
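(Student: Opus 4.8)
The plan is to reduce all three items to the single witness $\Gstar$ of Example~\ref{ex:star} together with one structural dichotomy for undirected rule cycles: an \emph{incremental} derivation keeping the meta-state bounded versus a \emph{naive} derivation letting it grow without bound. Item~1 is essentially a restatement of Observation~\ref{both}, so I would simply invoke $\Gstar$, which is minimal and deterministic and whose rule set forms an undirected rule cycle: its incremental derivation $\emptygraph\To_{\rr(\rr\rl)^*\rl}G$ of every necklace $G\in\Lgram$ passes only through the meta-states $pq$ and $p^2q^2$, whereas the naive derivation $\emptygraph\To_{\rr^n}\Groot\To_{\rl^n}G$ of the necklace with $n$ spokes passes through $p^nq^n$; hence $\metaqmin$ is finite but $\metaqzero$ is infinite. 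This grammar is also the prototype for items~2 and~3.

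For item~3, fix an undirected rule cycle $c=\m r_1,\dots,\m r_n$ whose rules carry exactly their two marked nonterminals and nothing else, and let $\pi_k$ be the DAG structure obtained by traversing $c$ $k$ times -- a path or, by Observation~\ref{obs:yieldcycle}, an undirected cycle -- on $\Theta(kn)$ vertices. For ``finitely many meta-states suffice'' I would spell out the incremental derivation of a complete DAG $G$ built around $\pi_k$: since each rule of $c$ changes the size of the meta-state by at most~$2$ and the cyclic pattern of $c$ repeats, $\pi_k$ can be generated copy by copy while only a bounded ``interface'' of dangling nonterminals is ever live -- build copy~$1$, attach copy~$2$ to its interface, close off copy~$1$, attach copy~$3$, close off copy~$2$, and so on, closing any resulting undirected DAG cycle only with the final rule applications. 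Along such a schedule every meta-state has size at most $c_0+2n$ for a constant $c_0$ depending only on $c$ and the fixed completion of $\pi_k$ into a complete DAG but not on $k$, so only finitely many meta-states occur.

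For the second half of item~3, and likewise of item~2, I would use that an undirected rule cycle necessarily contains a \emph{root-producing} rule $\m r_j$ with empty head and a two-nonterminal tail: undirectedness forces a vertex along the cycle at which the path ``leaves to both sides'', and, as the rules carry only their two marked nonterminals, that rule has both marks in its tail, hence an empty head. By determinism $G$ has a unique derivation DAG, in which each of the $k$ occurrences of $\m r_j$ along $\pi_k$ creates a \emph{root} of $G$ of out-degree~$2$; firing all $k$ of these empty-head rules first is a legal derivation prefix of $G$ (an empty-head rule consumes nothing) and leaves $2k$ dangling nonterminals, i.e.\ a meta-state of size $2k$. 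As $k\to\infty$ this shows that no finite set of meta-states covers all such derivations -- the second claim. Item~2, an arbitrarily long undirected rule path, is essentially the unrolling of an undirected rule cycle capped at its two ends by weakly marked rules, so the argument of item~3 carries over, the two weakly marked end rules contributing only a fixed, bounded set of extra meta-states.

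I expect the main obstacle to be the ``finitely many suffice'' direction: turning the informal copy-by-copy schedule into a precise legal derivation for an arbitrary undirected rule cycle -- pinning down the bounded interface between consecutive copies, verifying that the schedule really is a derivation, correctly handling the step that closes an undirected DAG cycle, and controlling the extra meta-states introduced when $\pi_k$ is completed into a full DAG. The ``infinite set needed'' direction is comparatively routine once the empty-head root-producing rule of $c$ is located and determinism is used to fix the derivation DAG of $G$, and item~1 is immediate from Example~\ref{ex:star}.
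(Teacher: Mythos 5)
Your proposal follows essentially the same route as the paper: item~1 is discharged by Example~\ref{ex:star} (i.e.\ Observation~\ref{both}), and items~2 and~3 are handled by contrasting an incremental derivation that adds nonterminals only as needed (bounded meta-states) with a ``generate all roots first'' derivation whose meta-state grows with the number of cycle iterations (unbounded). Your explicit observation that an undirected rule cycle whose rules carry only their two marked nonterminals must contain an empty-head, two-nonterminal-tail rule --- and that firing all $k$ of its occurrences first is a legal derivation prefix --- makes precise what the paper only gestures at via the star example, so you have if anything supplied more justification than the paper's own argument; your reduction of item~2 to item~3 by unrolling mirrors the paper's treatment and inherits the same level of informality there.
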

\begin{proof}
The first statement is equivalent to Observation~\ref{both}.
Both Statements (\ref{both:item:path}) and (\ref{both:item:cycle}) may yield a path $\pi$ of unbounded length.
Such a path $\pi$ gives rise to a derivation
which does not increase the cardinality of the meta-state,
such that $\metaQmin$ with respect to $\Pi$ and $c$ is finite.
This is possible by using rules
that add nonterminals to the current meta-state
only when they are needed,
just like for the DAG language of stars in Example~\ref{ex:star}.
But for undirected rule paths and cycles, we can rearrange
the rules in the derivations
in order to first generate all roots (again, c.f. Example~\ref{ex:star}).
Since both $\Pi$ and $c$ can make the path $\pi$ unbounded,
the size of the meta-states will grow without bound,
yielding an infinite $\metaQzero$.
\end{proof}

In the next section, we will turn to languages which cannot be generated with a finite set of meta-states.
Prior to this, we look at the unexpected fact that 
although a label does not occur in any rule cycle, its number of occurrences could be unbounded.

\newcommand\captiondecstarmany{$G_k[e_{k-1} \bowtie e_k]=G_{k-1}\union G_{k}$}
\begin{figure}
\vspace{-.8cm}
\centering
%
  \subfloat[$G$]{
    \label{pic:decstarG}
    \begin{tikzpicture}[->,auto=right]
\tikzstyle{vx}=[circle,draw,inner sep=2pt]
\node [vx]            (r) at (0,0) {$\labelr$};
\node [vx,right of=r] (m)          {$\labelc$};
\node [draw,circle,inner sep=1pt,right of=m] (l)          {$\labell$};

\path (r) edge (m)
      (m) 
          edge [out= 90, in=270,looseness=1.8,thick] node [pos=.4,above] {$e$}(l)
      (m) edge [out=270, in= 90,looseness=1.8,opacity=.7] (l);

\end{tikzpicture}
  }
  \hfill
  ~
  \hfill
  \subfloat[$G_0 = G$]{
    \label{pic:decstarG0}
    \begin{tikzpicture}[->,auto=right]
\node              (r) at (0,0) {$r_0$};
\node [right of=r] (m)          {$c_0$};
\node [right of=m] (l)          {$l_0$};

\path (r) edge (m)
      (m) edge [out= 90, in=270,looseness=1.8,thick] node [pos=.4,above] {$e_0$}(l)
      (m) edge [out=270, in= 90,looseness=1.8,opacity=.7] (l);

\end{tikzpicture}
  }
  \hfill
  ~
  \hfill
  \subfloat[\captiondecstarmany]{
    \label{pic:decstarmany}
    \begin{tikzpicture}[
->,
decoration/.style={pin={above:m}},
unemph/.style={opacity=.5},
pin distance=4pt,
every pin edge/.style={<-,},
]
  \node                       (c0)    {$c_0$};
  \node       [right of=c0]   (c1)    {$c_1$};
  \coordinate [right of=c1]   (c2);
  \coordinate [right of=c2]   (ci-1);
  \node       [right of=ci-1] (ci)    {$c_{k-1}$};
  \node       [right of=ci]   (ck)  {$c_k$};
  \node       [below of=c0]   (l0)    {$l_0$};
  \node       [below of=c1]   (l1)    {$l_1$};
  \node       [below of=c2]   (l2)    {$l_2$};
  \node       [below of=ci]   (li)    {$l_{k-1}$};
  \node       [below of=ck] (lk)    {$l_{k}$};

  \foreach \c in {0, 1, i, k}
  {
      \node [
             above of=c\c] (r\c) {$r_\c$};
      \path[->, unemph] (r\c) edge 
                        (c\c);
  }
  \path[->] (c0) edge [unemph] (l0)
            (c0) edge [unemph] (l1)
            (c1) edge [unemph] (l1)
            (c1) edge [unemph] (l2)
            (ci) edge [unemph] (li)
            (ci) edge [thick] node [sloped,above] {$e_{k-1}$} (lk)
            (ck) edge [unemph] (lk);
  \path[->,dotted] (c2)   edge (l2)        
                   (ci-1) edge (li);
  \path (ck) edge[->,thick,out=-45,looseness=1.8,in=180] node [above, pos=0.45] {$e_k$} (l0);
  \path (l2) -- node {$\dots$} (ci-1);

  \pgfresetboundingbox
  \node [below  left of=l0] (helpnode1) {};
  \node [above right of=rk] (helpnode2) {};
  \path [use as bounding box] (helpnode1) rectangle (helpnode2);

\end{tikzpicture}
  }
  \hfill
  ~
\captionsetup{subrefformat=parens}
\caption{The decorated one-pointed star $G$ is shown in \subref{pic:decstarG}.
This equals the star $G_0$ in \subref{pic:decstarG0},
where in addition to the vertex label,
the label's index allows us to reference  
each vertex uniquely by its number of copies. 
Note thus, that in \subref{pic:decstarG0},
as well as in~\subref{pic:decstarmany},
the index is not part of the vertex label.
To draw the graph itself in those two pictures,
the indices would be stripped off.
Note, \subref{pic:decstarmany} illustrates the swapping
of $k+1$ disjoint isomorphic copies
to a $k+1$-pointed star $G(e \bowtie e)^k$.
\label{fig:noncycle}}
\end{figure}

\newcommand\ellrule{l \rarrowl\lambda}
\begin{theoremEnd}[text link=,end, restate]{lemma}[\tbd{Unbounded without Rule Cycle}]\label{theorem:noncycle}
There exists a DAG grammar $\gramdef$ and
a vertex or edge label $\labelu \in \Sigma \cup N$
such that the number of occurrences of $\labelu$
in a (derivation) DAG generated by $\gram$ 
is unbounded,
although $\labelu$ does not occur in any rule cycle.
\end{theoremEnd}
\begin{proofEnd}
Consider the DAG grammar
$\gram = (\{r, c ,l\}, \{\labelr, \labelc, \labell\}, R )$ containing the following rules
$R = $
\mbox{
$\{
  \lambda \rarrow{r}rr,\allowbreak\ 
  rr \rarrow{c} c,
  c \rarrow{c} l,\allowbreak\ 
  \ellrule
\}$
}
and let $\labelu \in  \{ l, \labell\}$,
thus let $\labelu$ either denote
the edge label $l$ or the vertex label $\labell$.
Then, the vertex label $\labell$
cannot occur in a \emph{marked rule} $\m\alpha\sigmaarrow\m\beta$,
which, by definition, comprises two marked nonterminals.
On the contrary, the only rule with the label $\labell$ is 
$\ellrule$ with $|\alpha\beta|\!=\!|l\lambda|\!=\!1$
and thus comprises only one nonterminal.
Consequently, $\labell$ cannot occur in a rule cycle since 
a rule cycle contains marked rules only.
And, neither can the edge label $l \in N\!$.
In a rule cycle, $l$ would occur both in a head as well as in a tail
which it does in $R$, but, again,
the rule $\ellrule$ cannot take part in a rule cycle.
Thus, by definition, 
$\labelu$ cannot participate in any of $\gram$'s rule cycles.

The DAG $G \in \Lgram$ in Figure~\ref{pic:decstarG}
uses each rule $r \in R$ only once,
just as both vertex label~$\labell$ and edge label $l$.
We can take $k$ disjoint isomorphic copies of $G$
for any $k \in \mathbb{N}$
and connect them by swapping the copies of $e$,
as shown in Figure~\ref{pic:decstarmany},
by Theorem~\ref{theorem:swap}.
The resulting DAG $G[e_{k-1} \bowtie e_k]^k$
is still accepted by $\gram$ and connected.
Moreover, it contains $k$~occurrences of label $\labelu$,
which proves the lemma.
\end{proofEnd}

The following lemma summarizes when the number of label occurrences is not bounded.

\begin{theoremEnd}[end, restate,text link=]{lemma}[\tbd{Unbounded Label Occurrence}]\label{theorem:unbound}
Let $\gramdef$ be a minimal deterministic DAG grammar
and $\labelu\in\Sigma\cup N$ a label of a vertex or an edge.
The number of occurrences of $\labelu$ in graphs $G \in \Lgram$,
or, for edge labels, in their corresponding derivation DAGs $D$, is unbounded,
iff one of the two following conditions is fulfilled:
\begin{enumerate}
\renewcommand\theenumi{\alph{enumi}\normalfont{)}}
\renewcommand\labelenumi{\theenumi}
\item\label{unbound:cycle}
Label $\labelu$ occurs in some rule cycle of $\gram$.
\item\label{unbound:noncycle}
There exist both a rule cycle $c$ in which an unmarked $q\in N$ occurs
as well as a rule path~$\Pi$ between this nonterminal $q$ and the label $\labelu$.
\end{enumerate} 
\end{theoremEnd}

\begin{proofEnd} 
Suppose that \ref{unbound:cycle} is true.
Then $L(\gram)$ is infinite, according to Theorem~\ref{theorem:linfiffrc}, which in turn means
that there is no bound on the length of (possibly undirected) paths in graphs $G \in L(\gram)$.
To obtain paths of unlimited length by the pigeonhole principle the repetition of rules is needed since $R$ is finite.
We may do so by using the rule cycle $c_u$ comprising $\labelu$. 
The derivation $\emptyset \To_R G' \To_{c_u}^* \To_R G$ does not impose a bound
on the number of label $\labelu$ occurrences on $G$s derived like that,
which proves that \ref{unbound:cycle} implies unboundedness of $\labelu$ occurrences.

Suppose that  \ref{unbound:noncycle} is true.
The cycle $c$ with the unmarked state $q$ generates not only the labels it comprises arbitrarily often,
as showed in above paragraph, but also an unbounded number of the nonterminal $q$ may appear
when taking the intermediate graphs into account that are generated in the various steps to yield $G$.
At every $q$ the derivation steps $\To_{\Pi}$ generate
a useful DAG with respect to $\Lgram$ with a path comprising $\labelu$.
But $\gram$ does not bound the generations of $q$, thus neither on the derivation step $\To_{\Pi}$
and consequently also the number of the labels $\labelu$s which shows that 
\ref{unbound:noncycle} implies that the number of occurrences of $\labelu$ is not bounded.

Turning now to the second direction we assume that no bound on the number of $\labelu$s
exists for the DAG language $L(\gram)$.
This means that $\gram$ can repeatedly generate $\labelu$. 
With a finite number of rules $R$, this is only possible by applying rules with label $\labelu$ an unbounded number of times.
Repetition of rules is obtained either by a rule cycle,
\ref{unbound:cycle} or by rule paths to a cycle \ref{unbound:noncycle}.
In every iteration of the cycle, also the rule path $\Pi$ is repeated and like that our label $\labelu$.

Obviously, rules that do not participate in any rule cycle,
not having a path to a rule cycle cannot be used in a derivation multiple times,
which proves the second direction.
This result carries over without difficulty when regarding $u$
as the edge label $u \in N$ in a derivation DAG $D$,
instead of the vertex label $u \in \Sigma$ of a DAG $\DAG D = G$,
which completes the proof.
\end{proofEnd}

\section{Infinite Number of Meta-States}

Unfortunately, from the algorithmic viewpoint at least,
there exist infinite DAG languages where a finite number of meta-states
is not sufficient for generating them in a top-down deterministic manner.
We call this language class the infinite meta-state DAG languages, abbreviated as $\ID$.

\begin{definition}[\tbd{Infinite meta-state DAG language ($\ID$)}]\label{def:idl}
A language $\Lgram$ of a minimal deterministic DAG grammar $\gram$ is called an
\emph{infinite meta-state DAG language} if $\metaqmin$ is infinite.
The language class comprised of all infinite meta-state DAG languages is denoted $\ID$.
\end{definition}

\begin{lemma}[\tbd{Infinite $\metaQmin$ -- infinite $\metaQzero$}]\label{theorem:infinite}
Infinite meta-state DAG languages exist.
\end{lemma}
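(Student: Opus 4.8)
The plan is to pin down a single grammar that is already in minimal deterministic form and whose language forces the meta-state to grow in \emph{every} derivation. Take $\gram=(\{X\},\{a,b,c,d,e\},R)$, where $R$ consists of the five rules $\lambda\rarrow{a}XX$, $X\rarrow{b}XX$, $X\rarrow{c}X$, $XX\rarrow{d}X$ and $XX\rarrow{e}\lambda$. The five distinct terminal symbols make $\gram$ deterministic (for each pair $(\alpha,\sigma)\in N^*\times\Sigma$ at most one rule has head $\alpha$ and terminal symbol $\sigma$); all five rules are useful (they all appear in the derivation of $Z_2$ described below); and one nonterminal is the minimum possible, since $L(\gram)$ contains DAGs with edges -- e.g.\ the ``bigon'' derived by $\lambda\rarrow{a}XX$ followed by $XX\rarrow{e}\lambda$ -- whereas a grammar without nonterminals generates only one-vertex DAGs. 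Hence $\gram$ is a minimal deterministic grammar for $L(\gram)$. For every $n\ge1$ let $Z_n$ be the \emph{zipper}: vertices $v_0,\dots,v_n,w_0,\dots,w_n$, a top rail $v_0\to v_1\to\dots\to v_n$, an \emph{oppositely} oriented bottom rail $w_n\to w_{n-1}\to\dots\to w_0$, and rungs $v_i\to w_i$ for $0\le i\le n$. This is a connected complete DAG, and $\emptygraph\ToR Z_n$: process $v_0$ with $a$, then $v_1,\dots,v_{n-1}$ with $b$, then $v_n$ and $w_n$ with $c$, then $w_{n-1},\dots,w_1$ with $d$, and finally $w_0$ with $e$. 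Thus $Z_n\in L(\gram)$ for all $n$.

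The crux is the observation that the size $|\meta{G_i}|$ of the meta-state after the $i$-th step of a derivation of a DAG $G$ equals the number of edges of $G$ that run from a vertex already carrying its terminal label (``processed'') to one not yet processed: processing a vertex consumes exactly the pending vertices standing for its incoming edges and creates exactly one pending vertex per outgoing edge, so at every moment the pending vertices are in bijection with the edges crossing from the processed set to its complement -- and this count depends only on $G$ and on the order in which its vertices are processed, which is necessarily a topological order of $G$. In $Z_n$ that order is \emph{forced}: the top rail gives $v_0\prec\dots\prec v_n$, the bottom rail gives $w_n\prec\dots\prec w_0$, and the rung $v_n\to w_n$ gives $v_n\prec w_n$, so every $v_i$ precedes every $w_j$; the unique admissible order is therefore $v_0,\dots,v_n,w_n,\dots,w_0$. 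Right after $v_n$ is processed the processed set is $\{v_0,\dots,v_n\}$, and all $n+1$ rungs $v_i\to w_i$ cross the cut, so \emph{every} derivation of $Z_n$ passes through the meta-state $X^{n+1}$.

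It follows that any set $\metaQ$ of meta-states satisfying condition~(1) of Definition~\ref{def:qmin} for $L(\gram)$ must contain $X^{n+1}$ for every $n\ge1$ -- over a one-letter nonterminal alphabet there is exactly one meta-state of each size -- so $\metaQ$ is infinite; hence $\metaQmin$ is infinite and $L(\gram)$ is an infinite meta-state DAG language, which proves the lemma (and, as $\metaQmin\subseteq\metaQzero$, also that $\metaQzero$ is infinite). The step that needs care -- and the reason this situation differs from Lemma~\ref{theorem:both}, where an undirected rule cycle inflates $\metaQzero$ while $\metaQmin$ stays finite -- is the \emph{unavoidability} of the blow-up: unlike the star language of Example~\ref{ex:star}, the derivation of $Z_n$ admits no rearrangement keeping the frontier bounded, because the reversed bottom rail cannot be started before the entire top rail has been processed. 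The zipper is chosen precisely so that this unavoidability is read off its unique topological order; for a less rigid witness (say, a grammar generating all full binary trees) one would instead have to establish a genuine lower bound on a width parameter, which is where the real difficulty would lie.
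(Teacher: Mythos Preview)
Your proof is correct and takes a genuinely different route from the paper's. The paper exhibits the binary-tree grammar $\Gtree=(\{q\},\{\rsymbol,\msymbol,\lsymbol\},\{\lambda\rarrow{r}qq,\ q\rarrow{m}qq,\ q\rarrow{l}\lambda\})$ and argues that a fully balanced tree of depth~$n$ cannot have any leaf processed before at least $n-1$ internal vertices have been processed; since the root rule and each middle rule strictly increase the meta-state size, some meta-state $q^{n}$ is unavoidable. That argument is essentially a lower bound over \emph{all} topological orders of the tree, and the paper handles it somewhat informally.

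Your zipper construction sidesteps exactly this difficulty: by reversing the bottom rail you force a \emph{unique} topological order, so the frontier bound $X^{n+1}$ is read off directly from the single admissible derivation rather than from a minimisation over many. Your identification of the meta-state size with the number of edges crossing the processed/unprocessed cut is the right invariant, and it makes the argument entirely self-contained. What the paper's choice buys is a more ``natural'' witness (trees) and a direct link to the subsequent Theorem~\ref{theorem:tree2id}; what your choice buys is a rigorous one-line lower bound in place of a width argument. Your closing remark about balanced binary trees requiring a genuine width lower bound is apt---that is precisely the step the paper leaves implicit.
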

\begin{proof}
Let us consider the following DAG grammar for binary trees
$\Gtree = (\{q\},$ $\{\rsymbol,\msymbol,\lsymbol\}, R)$.
As a tree, every $G \in L(\Gtree)$ has only one root. Therefore, every derivation for the tree grammar $\Gtree$
applies $r_r = \lambda \rarrow{r} qq$ only once: $\emptygraph \To_{r_r} \groot \To_{R \setminus \{r_r\}} G$.
The leaf rule $\lrule = q \rarrow{l} \lambda$ cannot be part of a rule cycle
since this requires a minimum of two nonterminals.
The rule $\mrule = q \rarrow{m} qq$ is the only rule that occurs in a rule cycle.

Every derivation step using $\mrule$ increases the cardinality of the meta-state by one
due to consuming one nonterminal and producing two nonterminals.
If $\mrule$ is the only rule in a rule cycle and $\mrule$ increments the meta-state
in every derivation step $ \To_{\mrule}$,
do we end up with an infinite set $\metaQmin$?
Luckily, Lemma~\ref{theorem:noncycle} tells us that we can repeatedly apply derivation
steps also with rules not being part of a rule cycle.
Lemma~\ref{theorem:unbound} confirms that this is the case for the rule $\lrule$,
since we have an unmarked $q$ in all marked rule combinations of $\mrule$
for a rule cycle and we have a rule path $q$ to $\labell$ only with the rule $\lrule$.
The number of leaves, labeled by $\labell$, is not bounded.
As a sole rule in $R$ the leaf rule decreases the cardinality of the meta-state.
It is folklore that the amount of leaf vertices is one more than non-leaf vertices.
This sounds thus promising to end up for a finite set $\metaQmin$.
The derivations 
$\emptygraph \To_{r_r} \groot \To_{\lrule} \groot' \To_{(\mrule \lrule)^*} G' \To_{\lrule} G$
generate DAGs in $L(\gram)$.
Those derivations include only two meta-states
$\metagroot = q^2$ and $\metagroot = \meta G'= q$.
Unfortunately, this does not yield fully balanced trees.

If we consider the fully balanced binary trees of depth $n$,
where every leaf is the $n$th vertex in a path from the root to the leaf,
$\gram$ has to generate at least $n-2$ vertices labeled $\labelm$ or $n-1$ vertices not labeled $\labell$
until it can apply the rule $\lrule$. This, in turn, means that $n-2$ derivation steps increment the meta-state.
The generation of a complete binary tree results in a meta-state $q^n$.
For generating the language of binary trees of all depths, $\metaQ = \{q^i \mid i\!\in\![n]\}$.
But this set is minimal, since no other derivations exist,
and infinite which shows that $\metaQmin(\Gtree)$ is infinite.
Note, $\Gtree$ is a deterministic and minimal grammar.
Thus, minimal deterministic DAG grammars with finite $\metaQmin$ exist.
\end{proof}

After proving the existence of languages in the language class $\ID$,
we present languages in this class.
The following theorem about certain tree languages follows directly
from Lemma~\ref{theorem:infinite}. 

\begin{theorem}[Trees in $\ID$]\label{theorem:tree2id}
A minimal deterministic DAG grammar $\gramdef$
generating fully balanced trees of unbounded size,
$\Lgram$ is in $\ID$.
\end{theorem}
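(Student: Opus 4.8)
The plan is to reduce the statement to the definition of $\ID$ (Definition~\ref{def:idl}): it suffices to show that $\metaqmin$ is infinite. Since $N$ is finite there are only finitely many multisets over $N$ of any bounded size, so it is enough to prove that for arbitrarily large $n$ the language $\Lgram$ contains a DAG \emph{every} derivation of which passes through a meta-state of size at least $n$; then every set $\metaqmin$ as in Definition~\ref{def:qmin} must contain meta-states of unbounded size, hence is infinite. This is the same mechanism that makes $\metaQmin(\Gtree)$ infinite in the proof of Lemma~\ref{theorem:infinite}; the task here is only to make it work for an arbitrary minimal deterministic grammar whose language consists of fully balanced trees (all leaves at a common depth, every internal vertex branching at least twice) of unbounded size.

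First I would set up the tree-theoretic reading of a derivation. Fix a tree $T\in\Lgram$ and a derivation $\emptygraph\To^*_R T$. Because a vertex of $T$ is produced only as a child of an already-processed vertex, the order in which the vertices of $T$ are \emph{processed} (turned into terminal-labeled vertices) is a linear extension of the ancestor order on $T$, and at each intermediate step the prefix DAG is a ``prefix of $T$'' consisting of the processed vertices together with their not-yet-processed children as dangling nonterminals. Hence the meta-state at any moment is exactly the multiset of \emph{open} vertices of $T$: those already revealed (a child of a processed vertex) but not yet processed; equivalently, the children of the processed subtree lying outside it. In this language the claim to prove is: if $T$ is fully balanced of depth $n$, then every derivation of $T$ has an intermediate meta-state of size at least $n$.

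For this lower bound I would fix any derivation of $T$, let $t$ be the first step at which some depth-$n$ leaf $v$ is processed, and count the open vertices just before that step. Writing $v=v_n,v_{n-1},\dots,v_0$ for the chain of ancestors of $v$, all of $v_0,\dots,v_{n-1}$ are processed at time $t$, so for each $i\in\{0,\dots,n-1\}$ the step processing $v_i$ already revealed a child $u_{i+1}\neq v_{i+1}$ of $v_i$. The subtrees of $T$ rooted at $u_1,\dots,u_n$ are pairwise disjoint (each $u_{i+1}$ hangs off the path at $v_i$), and none of their depth-$n$ leaves has been processed yet, since $v$ is the first one processed. Therefore each of these subtrees contains an open vertex: if $u_{i+1}$ is still open we take it, and otherwise the processed part of that subtree is a proper subtree rooted at $u_{i+1}$ whose nonempty frontier consists of revealed, unprocessed vertices. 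This gives $n$ pairwise distinct open vertices at time $t$, i.e.\ a meta-state of size at least $n$. Combining with the first paragraph, $\metaqmin$ contains meta-states of size at least $n$ for every $n$ realised as the depth of a tree in $\Lgram$, and since these depths are unbounded, $\metaqmin$ is infinite, so $\Lgram\in\ID$ by Definition~\ref{def:idl}.

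The reduction and the tree bookkeeping are routine; the one place that needs care is the lower-bound step, specifically (i) verifying that the $n$ sibling subtrees along the deepest path are genuinely disjoint and each contributes an open vertex, and (ii) pinning down what ``fully balanced'' is taken to mean --- one must exclude degenerate families such as paths, which are string DAG languages with finite $\metaqmin$ by Lemma~\ref{theorem:string-finite}, so the intended reading has branching at least two at every internal node, which is exactly what makes the bound linear in $n$.
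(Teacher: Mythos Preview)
Your argument is correct for root-down trees and follows the same core idea as the paper's (which for that orientation simply defers to Lemma~\ref{theorem:infinite}): examine the meta-state just before the first depth-$n$ leaf is processed and show it has size at least $n$. Your version is in fact tighter and more general than what the paper offers---Lemma~\ref{theorem:infinite} argues only for the specific grammar $\Gtree$ by counting meta-state increments, whereas you exhibit $n$ pairwise disjoint off-path subtrees, each forced to contribute an open vertex, which works for any minimal deterministic grammar generating such trees.

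There is one case the paper's proof covers and yours does not: the ``upside-down'' orientation (many roots converging to a single leaf). Your frontier argument does not transfer as-is, because just before the unique leaf is processed the frontier has size one; the right lower bound here is a pebble-game statement (to merge a fully balanced depth-$n$ binary fan-in bottom-up, every schedule must at some moment hold at least $n+1$ pending results), which the paper gestures at with ``a vertex at depth $n$ requires $n$ roots''. If the theorem is meant to cover both orientations---and the paper's proof treats both---you should add a short paragraph handling this dual case; your closing caveat about pinning down ``fully balanced'' is exactly the place to do it.
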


\begin{proof}
For $\Lgram$ comprising fully balanced trees with one root,
$\Lgram \in \ID$ was shown to be true in Lemma~\ref{theorem:infinite}.
For trees ``upside down'', one leaf and many roots,
the argumentation is similar.
Instead every derivation step $\To_{\mrule}$
for a path from a root to a leaf
adds a state to produce the ensuing meta-state.
In doing so, it forces dependencies to be derived.
The top-down generation of $\gram$ prevents the isolated derivation of a root-to-leaf path.
A vertex at depth $n$ requires $n$ roots.
Therefore, the size of the meta-states depends on $n$, yielding infinitely many meta-states.
Thus, the dependencies are responsible for the infinite class $\metaqmin$.
\end{proof}

Trees are fine, but what about graphs?
We observed (Obs.~\ref{obs:yieldcycle}) that only undirected rule cycles can yield cycles in DAGs.
Directed trees are generated by directed rule cycles. So let's consider
languages with proper graphs, thus generated by undirected rule cycles.

\begin{lemma}[DAG Cycles in $\ID$]\label{lemma:chord2id}
Let $\gram$ be a minimal deterministic DAG grammar.
Then $\Lgram \in \ID$ if some DAG $G \in \Lgram$ contains
a cycle of arbitrary length that exhibits a chord path.
\end{lemma}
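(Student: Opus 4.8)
The aim is to show that $\metaqmin$ is infinite; by Definition~\ref{def:idl} this is exactly $\Lgram\in\ID$. The plan is to exhibit, for arbitrarily large $k$, a DAG $G_k\in\Lgram$ \emph{every} derivation of which is forced through a meta-state whose size grows with $k$, so that no finite set of meta-states can serve as a $\metaqmin$ for $\Lgram$.

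I would first pass from the hypothesis to a rule-level skeleton. Since $\Lgram$ contains cycles of unbounded length it is infinite, so by Theorem~\ref{theorem:linfiffrc} the rule set $R$ contains a rule cycle, and by Observation~\ref{obs:yieldcycle} a cycle inside a DAG of $\Lgram$ can only be yielded by an \emph{undirected} rule cycle; fix one such rule cycle $c$ whose iteration produces the growing cycle, together with a rule path $\Pi$ (Definition~\ref{def:rule}) that yields the chord path attached to it. Iterating $c$ -- splicing in isomorphic copies via Lemma~\ref{theorem:swap} where convenient -- then yields, for each $k$, a DAG $G_k\in\Lgram$ containing a cycle $C_k$ whose length grows with $k$ together with a chord path $P$; its two endpoints $u,w$ split $C_k$ into arcs $A_1,A_2$, so that $A_1$, $A_2$ and $P$ are three internally disjoint paths between $u$ and $w$, all simultaneously present in $G_k$.

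The crux -- and the step I expect to be the main obstacle -- is the meta-state lower bound: in every top-down derivation of $G_k$ some meta-state has size tending to infinity with $k$. In a derivation the meta-state at a given step is the multiset of nonterminal labels on the edges crossing the frontier between already-processed and not-yet-processed vertices (Definition~\ref{def:metastate}), and by Lemma~\ref{theorem:metastate} only this multiset governs how the derivation may proceed, so it suffices to lower bound, over all topological processing orders of $G_k$, the largest number of edges of $C_k\cup P$ that cross the frontier at once. Here the chord is indispensable: for a \emph{chordless} long cycle the derivation steps can be reordered so that this number stays bounded -- that is precisely the ``zip-up'' derivation used for $\Gstar$ in Example~\ref{ex:star}, and likewise what works for a ladder of short rungs -- whereas the chord $P$ obstructs such a reordering, since neither arc of $C_k$ can be closed off while a portion of $P$ is still pending, forcing an unbounded part of $C_k$ to stay open. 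Determinism of $\gram$ is what blocks the obvious escape: because the rule fired at each vertex of $C_k$ is determined by the nonterminals on its incoming edges, the grammar cannot decide locally where $P$ attaches, and must instead carry that information around $C_k$ through the edge labels, which is exactly why the meta-state cannot be kept bounded. Making this rigorous -- ruling out \emph{every} reordering, in the same spirit as the balanced-tree argument behind Lemma~\ref{theorem:infinite} -- is the technical heart of the proof.

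Given the lower bound, a finite $\metaqmin$ would have to contain, for every $k$, a meta-state of size exceeding any prescribed bound, which is absurd; hence $\metaqmin$ is infinite and $\Lgram\in\ID$. I would end as for Theorem~\ref{theorem:tree2id}, remarking that the argument uses only pumpability of $c$ together with the presence of the chord, so it carries over verbatim to the edge-reversed ``upside-down'' structure, and that minimality of $\gram$ is used only so that ``$\metaqmin$ infinite'' is the correct certificate for membership in $\ID$, not in the meta-state bound itself.
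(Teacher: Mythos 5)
Your reduction to an undirected rule cycle $c$ plus a rule path $\Pi$ yielding the chord matches the paper's setup, but the step you yourself flag as ``the technical heart'' is exactly where the argument breaks, and the claim you would need there is not true. A single chord path $P$ attached to a long cycle $C_k$ does \emph{not} force an unbounded meta-state: the zip-up derivation of Example~\ref{ex:star} generates an arbitrarily long undirected cycle while keeping the meta-state at constant size, and a single chord between two of its vertices only adds a constant number of pending nonterminals (open the chord when its first endpoint is reached, carry its dangling edge along, close it at the other endpoint). So ``neither arc can be closed off while a portion of $P$ is pending'' yields only an $O(1)$ overhead, not a bound growing with $k$; no amount of care about reorderings of a single-chord instance will produce the lower bound you need.

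The paper supplies the missing idea by manufacturing a \emph{different} DAG in $\Lgram$ on which the lower bound actually holds. Starting from a DAG whose cycle arises from $2n$ iterations of $c$ and which therefore carries $2n$ identically labelled chord paths (one per iteration), it applies Lemma~\ref{theorem:swap} to the chord edges, pairing $e_k$ with $e_{n-k+1}$, so that the chords become \emph{nested} (first opened, last closed). The structural observation driving the size bound is that the end vertices of a chord path have degree at least three, so the rules generating a chord necessarily carry an \emph{unmarked} nonterminal $s$ that must be stored in the meta-state for as long as that chord is open; with $n/2$ chords nested, every derivation must at some point hold $s^{n/2}$ simultaneously. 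Since the swapped DAG is still in $\Lgram$ by Lemma~\ref{theorem:swap}, $\metaqmin$ must contain meta-states of unbounded size. The swap-induced nesting of many chords --- not the mere presence of one --- is what defeats every reordering, and it is the piece your proof is missing.
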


\begin{proof}
%
As Lemma~\ref{theorem:infinite} tells us,
generating a cycle of arbitrary length causes $\metaqzero$ to be infinite,
but not necessarily $\metaqmin$.
Hence, the infinite size of $\metaqmin$
hides in the rule cycle's chord path.

In a DAG, a cycle of arbitrary length $m \in \mathbb N$
requires the repeated application 
of an undirected rule cycle (Obs.~\ref{obs:yieldcycle})
in an unbounded number of iteration steps,
say, in $n \in \mathbb N$ iteration steps,
since the set of rules in $\gram$ is finite and,
in order to obtain a cycle of arbitrary length $m$, as stated, 
by the pigeonhole principle,
we need an unbounded number $n$ of applications of the required rule cycle.

\newcommand\csize{|c|}
Let $c$ be this undirected rule cycle that is required.
With the constant $\csize$ denoting the number of rules $c$ consists of,
the length of the DAG's undirected cycle is $m = \csize \cdot n$. 

Consider the DAG $G$ generated by $2n$ rule cycle iterations of $c$,
yielding a cycle of length $2\cdot\csize\cdot n$,
as detailed in the following.

%
As stated, the cycle in $G$ has a chord path.
However, if one application of $c$ generated the stated chord path,
a graph generated by many applications of $c$,
could, if `wired' appropriately,
contain many such chord paths instead of just one.
The grammar $\gram$ generated $G$ by $2n$ applications of $c$.
Consequently, $\gram$ can generate $G$ in a way
that it exhibits $2n$ such chord paths,
thus, with identical edge labels as the stated chord path exhibits.
The identical labeling is key for the proof,
since this will allow `wiring' the chords differently.

%
By definition (Def.~\ref{def:rule}),
a rule cycle includes marked rules
with a minimum of two nonterminals with two of them marked.
However, obviously, the end vertices of a chord path
have a vertex degree of at least three \cite{textbooks/graph/westchordpath}.
Due to that, the marked rules generating the chord path
need at least three nonterminals and that means one unmarked nonterminal.
Thus, a chord path requires an unmarked nonterminal.
An unmarked nonterminal needs to be \emph{saved} in a meta-state.
Let us call this nonterminal $s$. 
Assume that $\gram$  opens and closes the chord paths one after another
while generating $G$.
Then $\gram$ could possibly reuse the meta-state in which we stored $s$:
when it opens the chord path, it stores $s$ in the ensuing meta-state,
when it closes the chord path, it does not need to save $s$ anymore.
Then, $\gram$ repeats this step for the next chord path,
reusing the meta-state in which it stored $s$. 

Yet, Lemma~\ref{theorem:swap} allows the swapping of the $2n$ chord paths.
In order to obtain one of those graphs in $\Lgram$ 
requiring an infinite $\metaqmin$, let us swap the chord paths in $G$
(technically more precise: certain edges of the chord paths).
Let $e_i$, labeled with the afore mentioned $\ell(e_i)=s$,
be one of the ends of the chord path number $i$ for $i \in [2n]$.
Swapping $G[e_1\bowtie e_n]$, $G[e_2\bowtie e_{n-1}]$, \dots,
thus $G[e_{k}\bowtie e_{n-k+1}]$ for $k \in [n/2]$
results in the first $n$ chord paths bridging a distance of length $\csize\cdot n$ of its DAG cycle.

Please note: Swapping the second half of the chord paths is unnecessary,
but can be performed in order to obtain a symmetric, beautiful graph.
If we would, however, swap the $2n$'th chord path with the first, the distance would be one length of $c$, since the chords are arranged on a cycle.
First and $2n$'th chord paths are neighbors, thus.
That is the reason for choosing $2n$ instead of $n$ cycle iterations of $c$.
The first and the $n$'th chord paths are a maximum distance apart
with respect to the cycle generated by $2n$ applications of~$c$.

After these swapping operations, the chords are nested:
$\gram$ opens the first chord path
before the second but closes the second before the first,
and so on, according to the FILO (first in, last out) principle.
This constitutes the proof: by opening $n/2$ chord paths,
all identically labeled with $s$,
there exists a meta-state in $\metaqmin$
that has to store all those $s^{n/2}$ labels. 
Then, $\gram$ has to remember
that all these edges of the chord paths are dangling.
Thus, the number of meta-states depends on
the number of cycle iterations $2n$,
which completes the proof, since $n \in \mathbb N$.
\end{proof}


\section{Characterization}

Now we contemplate the newly found language classes which divide the class
of top-down deterministic regular DAG languages $\RDLdet$,
whose languages are generated by deterministic DAG grammars,
into disjoint subclasses.

\begin{observation}[$\FID$ $\cup$ $\ID$ $= \RDLdet$]\label{obs:fididrdldet}
By the definitions of $\FID$ and $\ID$, it is true that $\FID$ and $\ID$ are disjoint and that $\FID$ $\cup$ $\ID$ $= \RDLdet$.
\end{observation}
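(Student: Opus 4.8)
The plan is to read both halves of the statement off the \emph{Minimal Grammar} theorem: for every deterministic DAG grammar there is a minimal deterministic DAG grammar generating the same language, and it is unique up to a bijective renaming of nonterminals. The only extra ingredient needed is the elementary fact that a bijective renaming $\phi\colon N_1\to N_2$ of nonterminals lifts canonically to a bijection $\mathbb N^{N_1}\to\mathbb N^{N_2}$ on meta-states and carries derivations of one grammar step-by-step to derivations of the other; hence it maps any set of meta-states witnessing the defining properties of $\metaQmin(\gram_1)$ to such a set for $\gram_2$, so the minimal cardinalities agree. In particular ``$\metaQmin$ is finite'' is a property of the language $\Lgram$, not of the chosen minimal grammar — and this is exactly what makes Definitions~\ref{def:fidl} and~\ref{def:idl} coherent.

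For disjointness, suppose toward a contradiction that $L\in\FID\cap\ID$. By Definition~\ref{def:fidl} there is a minimal deterministic grammar $\gram_1$ with $L(\gram_1)=L$ and $\metaQmin(\gram_1)$ finite, and by Definition~\ref{def:idl} there is a minimal deterministic grammar $\gram_2$ with $L(\gram_2)=L$ and $\metaQmin(\gram_2)$ infinite. The Minimal Grammar theorem forces $\gram_1$ and $\gram_2$ to coincide up to a bijective renaming of nonterminals, so by the invariance just discussed $|\metaQmin(\gram_1)|=|\metaQmin(\gram_2)|$, contradicting that one is finite and the other infinite. Hence $\FID\cap\ID=\emptyset$.

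For the covering, the inclusion $\FID\cup\ID\subseteq\RDLdet$ is immediate: by Definitions~\ref{def:fidl} and~\ref{def:idl} every language in either class is generated by a (minimal) deterministic DAG grammar and therefore lies in $\RDLdet$. Conversely, let $L\in\RDLdet$, generated by some deterministic DAG grammar $\gram$. The Minimal Grammar theorem yields a minimal deterministic grammar $\gram'$ with $L(\gram')=L$, and $\metaQmin(\gram')$ is either finite, in which case $L\in\FID$ by Definition~\ref{def:fidl}, or infinite, in which case $L\in\ID$ by Definition~\ref{def:idl}. Thus $L\in\FID\cup\ID$, and together with disjointness this gives $\FID\cup\ID=\RDLdet$ with the union being disjoint.

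I do not expect a genuine obstacle here; the proof is essentially bookkeeping against the definitions. The one point that must be made explicit — and the place where a careless argument could go wrong — is the well-definedness step: that finiteness of $\metaQmin$ transfers along the bijective renaming provided by the uniqueness part of the Minimal Grammar theorem, so that the dichotomy ``finite vs.\ infinite $\metaQmin$'' genuinely partitions $\RDLdet$ rather than depending on an arbitrary choice of minimal grammar.
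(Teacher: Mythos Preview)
Your proof is correct. The paper itself offers no separate argument for this observation; it is stated as self-evident from Definitions~\ref{def:fidl} and~\ref{def:idl}, with the phrase ``By the definitions of $\FID$ and $\ID$'' serving as the entire justification. Your write-up goes beyond this by invoking the uniqueness clause of the Minimal Grammar theorem to establish that finiteness of $\metaQmin$ is invariant under bijective renaming of nonterminals, hence a property of the language rather than of a particular minimal grammar---a well-definedness point the paper leaves implicit but which is exactly what one needs for the dichotomy to partition $\RDLdet$ cleanly.
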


The following lemma characterizes the language class $\ID$ by the rules of a grammar.
Since $\ID$ and $\FID$ are disjoint, indirectly it is also a characterization of $\FID$.

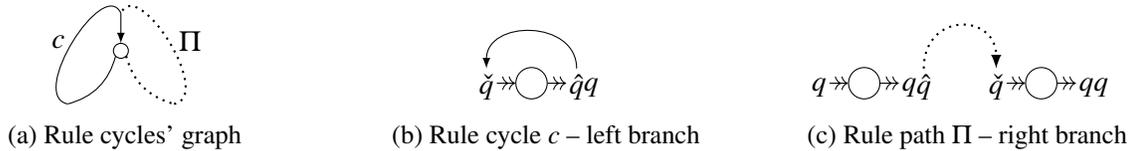
\begin{figure}
\vspace{-.8cm}
  \begin{subfigure}[b]{0.3\textwidth}
    \centering
    \begin{tikzpicture}[inner sep=-1pt, outer sep=0pt,minimum size=0]
\node (m) at (0,0) [draw, circle,inner sep=2pt] {~};
\node (helpernode) at (0,0.5) {};
\node [below left of=m] (helpernodeleft) {};
\node [below right of=m] (helpernoderight) {};

\path (helpernode) edge [-latex] (m)
      (m.south west) edge [bend left] (helpernodeleft)
      (helpernodeleft) edge [bend left,out=109,in=55] node[left] {$c$\hspace{4pt}~} (helpernode.north)
      (m.south east) edge [bend right, dotted, thick] (helpernoderight)
      (helpernoderight.east) edge [bend right,out=-109,in=-55, dotted, thick] node[right] {\hspace{2pt}~$\Pi$} (helpernode.north)
      ;
\end{tikzpicture}
    \caption{Rule cycles' graph}
    \label{pic:rulecyclegraph}
  \end{subfigure}
  \hfill
  \begin{subfigure}[b]{0.3\textwidth}
    \centering
    $
\anchor{h}{\men q}
\earrow
\anchor{t}{\mex q}
q
$
\begin{tikzpicture}[remember picture,overlay,anchor=base]
        \path (h) edge[latex-,out=90,in=90,looseness=1.4] (t);
\end{tikzpicture}
    \caption{Rule cycle $c$ -- left branch}
    \label{pic:rulecycle-tree-l}
  \end{subfigure}
  \hfill
  \begin{subfigure}[b]{0.3\textwidth}
    \centering
            $
        \anchor{r1h0}{q}
        \earrow
        q
        \anchor{r1t1}{\mex q}
    \qquad
        \anchor{r2h0}{\men q}
        \earrow
        q
        \anchor{r2t1}{q}
        $
        \begin{tikzpicture}[remember picture,overlay]
                \path (r1t1) edge[-latex, dotted, thick, out=90,in=90,looseness=2] (r2h0);
        
        \end{tikzpicture}
    \caption{Rule path $\Pi$ -- right branch}
    \label{pic:rulecycle-tree-r}
  \end{subfigure}
\caption{Rule cycles and their graphs for the binary tree language given by a cycle and its chord}
\label{fig:rulecyclegraph}
\end{figure}

\begin{theorem}[\tbd{Characterization of $\ID$}]\label{theorem:charrule}
For a minimal deterministic DAG grammar $\gram = (N, \Sigma, R)$
its set of meta-states $\metaqmin$ is infinite iff there exist
a rule cycle $c$, a rule path $\Pi$
and not necessarily distinct nonterminals $q, p \in N$
satisfying the following conditions:

\begin{itemize}
\item\label{charrule:item-c}
The nonterminals $q$ and $p$ occur in $c$ as unmarked $q \in N$ and marked $\m p \in \m N$.\lipicsrightdescr{$c$}
\item\label{charrule:item-Pi}
The rule path $\Pi$ lies between $q$ and $p$.
\lipicsrightdescr{$\Pi$}
\item\label{charrule:item-r}
The nonterminal $q$ is in $c$ in a head iff it is in $\Pi$'s weak rule in the head.
\lipicsrightdescr{$c\Pi$}
\end{itemize} 

\end{theorem}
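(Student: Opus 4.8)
The plan is to prove the two implications separately, leaning on the classification already assembled in Lemmas~\ref{theorem:string-finite}, \ref{theorem:both}, \ref{theorem:unbound} and~\ref{lemma:chord2id}.

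\emph{Sufficiency} ($\Leftarrow$). From the triple $(c,\Pi,q,p)$ I would manufacture a family $(G_n)_{n\in\nat}$ of DAGs in $\Lgram$ each of whose derivations is forced through a meta-state of size $\Omega(n)$, which rules out any finite $\metaqmin$. Iterating the rule cycle $c$ yields, by Observation~\ref{obs:yieldcycle}, a long directed path or a long undirected cycle $\pi$ in a derivation DAG, and by the first bullet every turn of $c$ hangs one fresh branch off an unmarked copy of $q$ while the marked $p$ labels an edge that lies on $\pi$ itself. Using the second and third bullets --- the last of which pins down the head/tail orientation at the weak $q$-rule of $\Pi$ so that the attachment produces a legal DAG --- I would append a copy of $\Pi$ to each such branch, obtaining a ``chord'' that runs from the spine back onto a $p$-labelled edge of the spine. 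Running $c$ for $2n$ turns produces $2n$ such chords, all of whose initial edges carry the same label (the copy of $q$), so the Swap Lemma~\ref{theorem:swap} lets me swap these edges pairwise, exactly as in the proof of Lemma~\ref{lemma:chord2id}, to obtain a DAG $G_n\in\Lgram$ in which the $n$ resulting chords are nested in FILO order. In any derivation of $G_n$ --- equivalently, any linearisation of its derivation DAG, which is unique by determinism --- the outermost chord cannot be closed before all the inner ones have been opened, so the corresponding meta-state simultaneously carries all $n$ of their dangling $q$-edges; hence $\metaqmin$ has elements of every size and is infinite. This step subsumes Lemmas~\ref{theorem:infinite}, \ref{theorem:tree2id} and~\ref{lemma:chord2id}.

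\emph{Necessity} ($\Rightarrow$). Here I would argue the contrapositive: if no such $(c,\Pi,q,p)$ exists, then every $G\in\Lgram$ has a derivation all of whose meta-states have size at most a constant $B(\gram)$, so $\metaqmin$ is finite. By Lemma~\ref{theorem:finite} we may assume $\Lgram$ infinite, hence by Theorem~\ref{theorem:linfiffrc} that the unbounded part of $G$ is produced by repeated turns of rule cycles. I would decompose the derivation DAG $D$ of $G$ (unique by determinism) into maximal ``cyclic strips'' --- realisations of iterated rule cycles --- glued along realisations of rule paths of bounded length, linearise each strip along its spine (cut $1$ for a directed strip, cut $\le 2$ for an undirected one, just as in Lemmas~\ref{theorem:string-finite} and~\ref{theorem:both}), and \emph{defer} the branches hanging off a strip, developing each one only once the meta-state that would otherwise have to carry it has been emptied, exactly like the scheduling used for the stars of Example~\ref{ex:star}. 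The only thing that can block this deferral is precisely an unmarked branch $q$ of a cycle $c$ from which a rule path $\Pi$ leads back onto a marked --- hence spine --- edge $p$ of $c$ with matching orientation, since then the branch cannot be finished without re-entering the already-long spine; excluding that configuration, every strip and every bounded junction is schedulable with bounded cut, and since $R$ is finite these bounds add up to a global $B(\gram)$. By Observation~\ref{obs:fididrdldet} this simultaneously characterises $\FID$.

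\emph{Main obstacle.} I expect the hard part to be the necessity direction, and within it the step that glues the per-strip bounded-cut schedules into one bounded-cut derivation of all of $G$: distinct rule cycles may share nonterminals and their realisations may be interleaved inside $D$, so one must show that deferring the branches of one strip does not force an unbounded build-up in another. Making this rigorous --- presumably by induction on a tree-like decomposition of $D$ into cyclic strips and bounded junctions, carrying along the frontier multiset --- is where the real work lies; the Swap Lemma together with the meta-state lemmas (Lemmas~\ref{theorem:metastate} and~\ref{theorem:useful}) and the classification of Lemmas~\ref{theorem:string-finite}, \ref{theorem:both} and~\ref{theorem:unbound} are the tools I would use.
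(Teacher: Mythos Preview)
Your sufficiency direction is essentially the paper's: iterate the rule cycle, attach the $\Pi$-chords, and use the Swap Lemma to nest them FILO so that some meta-state must hold $\Omega(n)$ dangling edges. The paper packages this by delegating to Lemma~\ref{theorem:tree2id} for the directed case and Lemma~\ref{lemma:chord2id} for the undirected one, whereas you spell out the swap construction inline; the content is the same. One small caution: your uniform ``swap to FILO'' description fits the undirected case (Lemma~\ref{lemma:chord2id}) cleanly, but for a \emph{directed} rule cycle the spine is a directed path and nesting chords by swaps can run into the independence precondition of the swap (you must not create directed cycles). The paper avoids this by treating the directed case separately via balanced trees (Lemma~\ref{theorem:tree2id}); you should make that case split explicit rather than fold it into the swap picture.

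Your necessity direction is genuinely different from the paper's. The paper does \emph{not} build a bounded-cut schedule; it argues the contrapositive condition-by-condition: no rule cycle $\Rightarrow$ finite language $\Rightarrow$ finite $\metaqmin$ (Lemmas~\ref{theorem:linfiffrc}, \ref{theorem:finite}); a rule cycle with only marked nonterminals $\Rightarrow$ finite $\metaqmin$ (Lemma~\ref{theorem:both}.3); an unmarked $q$ whose rule path does not return to a marked edge of the cycle $\Rightarrow$ only $\metaqzero$ blows up, not $\metaqmin$ (Lemma~\ref{theorem:both} again); and wrong head/tail orientation $\Rightarrow$ $\Pi$ is not a chord. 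This is short and uses only the lemmas already in place, at the cost of being informal about why ``no single $(c,\Pi,q,p)$ works'' globally implies a finite $\metaqmin$ for the whole language rather than just locally per cycle. Your constructive strip-decomposition plan is more ambitious and would, if carried through, give a stronger and more transparent bound $B(\gram)$; but the obstacle you flag --- gluing the per-strip schedules when distinct rule cycles share nonterminals and their realisations interleave --- is exactly the step the paper does not attempt, and you have not resolved it either. If you want a complete proof along the paper's lines, replace your scheduling argument by the condition-by-condition elimination; if you want to pursue your route, the inductive decomposition of $D$ you sketch is the right shape, but it needs an actual invariant on the frontier multiset, and that is real work beyond what either you or the paper supplies.
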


\begin{proof}
Before proving the stated bi-implication, observe 
that the above conditions result in two distinct rule cycles.
It is easy to see that $\Pi$ acts as a chord path for the rule cycle $c$,
and consequently, provides an alternative rule cycle.
Fig.\,\ref{fig:rulecyclegraph} demonstrates an example.
For the rule cycle $c$ condition
      \lipicsfont{$c$}
and
for the rule path $\Pi$ condition
      \lipicsfont{$\Pi$}
urge the suitable nonterminals for gluing $\Pi$ as a chord to $c$.
The requirements stated in
      \lipicsfont{$c\Pi$},
concerning the nonterminals' positions in head or tail
ensure the right orientation of the edge labeled $q$.
And, this alternative rule cycle provided by $\Pi$ is the crux of the matter.
Since the grammar has always two options to continue the derivation,
it always has to remember the dangling edges of the alternative it did not decide to generate next.
Thus, in every cycle iteration, regardless of the decision how to proceed,
a new meta-state is needed since the number of dangling edges
grows in an unbounded fashion.

We start with assuming that $\gram$'s rules have the stated form
and then prove that $\metaqmin$ is infinite.
The proof by construction relies on two techniques.
We construct a graph $G_n$
for which the number of meta-states in $\metaqmin$ needed to generate it
depends on the number of rule cycle iterations $n \in \mathbb{N}$.
The first technique is the (directed or undirected) path whose length depends on $n$:
Our grammar $\gram$ can generate the above stated DAG $G_n$
since a rule cycle can yield a path of length $k \cdot n$
where $k$ is the number of edges generated by the marked nonterminals in one cycle iteration.
A path as such is connected, and our path is a subgraph of a connected prefix DAG, generated top-down by $\gram$.
As such, it is a connected prefix DAG which, according to Lemma~\ref{theorem:useful},
$\gram$ can complete to a language-useful DAG.
The second technique is to disregard all the edges and vertices
not mentioned in the conditions stated about the rules of the grammar.
Obviously, they are irrelevant since, due to the swapping operation (Theorem~\ref{theorem:swap}),
they would only \emph{increase} the size and number of the meta-states.\footnote{%
Note that, with this second technique applied, $k$
-- the number of edges generated by the marked nonterminals in one cycle iteration --
equals to one. 
We abstract away how many edges the cycle exactly generates.
Relevant to distinguish between finite and infinite $\metaQmin$
is not the constant factor $k$, but only the number of iterations~$n$.}
Like that, $\gram$ causes $\metaQmin$ to be infinite by deriving $G_n$
with a path of length $n$ as a subgraph via: 
\begin{itemize}
\item 
\emph{directed} rule cycles and therefore fully balanced binary trees as subgraphs%
~(Lemma~\ref{theorem:tree2id})
\item 
\emph{undirected} rule cycles and therefore proper DAGs as shown in Lemma~\ref{lemma:chord2id}.
\end{itemize}
Consequently, $\gram$ generates a DAG $G_n$ whose size and number of meta-states 
depend on the number of rule cycle iterations which are unbounded. Like that, $\metaqmin$ is infinite.


The direction vice versa assumes an infinite $\metaqmin$ and arguments by negating the conditions one by one:
The existence of the stated rule cycle and rule path, as well as the requirements 
\lipicsfont{$c$}, \lipicsfont{$\Pi$} and \lipicsfont{$c\Pi$}
have to meet are iteratively shown to hold by negating them
and concluding a contradiction.
\begin{itemize}
\item Assume no $\Pi$ exists. By their definition, grammars without rule paths do not generate DAGs.
      With $\metaqmin$ being infinite, a rule path must exist. 
      Assume no $c$ exists.
      Languages without a rule cycle are finite (Lemma~\ref{theorem:linfiffrc}).
      Finite languages have a finite $\metaQmin$ (Lemma~\ref{theorem:finite}), contradicting our assumption that no rule cycle exists.
      Infinite languages are necessarily generated by cycles (Lemma~\ref{theorem:linfiffrc}).
      Thus, both rule path and cycle, called $\Pi$ and $c$, exist. 
\item 
      Assume condition \lipicsfont{$c$} does not hold.
      By definition, no rule cycles without marked nonterminals exist,
      thus $\m p$ must occur in $c$.
      A rule cycle without an unmarked nonterminal, here $q$,
      yields finitely many meta-states (Lemma~\ref{theorem:both}, 3.).
      Condition \lipicsfont{$c$} holds.
\item 
      Without condition 
      \lipicsfont{$\Pi$},
      thus without $\Pi$ being connected with its both ends $q$ and $p$ to $c$,
      there would be no chord path%
      \footnote{Recall the introduction of the proof for the explanation of $\Pi$ being a chord path.}.
      First, assume that there would be no rule path with $q$ as one of its ends.
      However, it is immediate that some rule path is indeed connected to $c$:
      Since $q$ occurs unmarked in $c$, a rule path $\Pi$ connected by $q$ necessarily exists,
      yielding -- without the chord -- an infinite
      set of meta-states, indeed. But, it is $\metaqzero$ which is infinite, not
      $\metaqmin$
      -- by Lemma~\ref{theorem:both} again
      -- in case $\Pi$ does not lead back to $c$ via $\m p$, our second assumption when negating
      \lipicsfont{$\Pi$}. This contradiction shows that a path $\Pi$ between $q$ and $p$ must exist for an infinite $\metaQmin$.
\item 
      Negating condition~\lipicsfont{$c\Pi$} means allowing
      $q$ as unmarked in a head of $c$ while it is in a tail of $\Pi$'s two weak rules or vice versa.
      This would induce the wrong orientation of $q$ so that $\Pi$ would not be a chord.
      And, we already know from
      the previous point 
      that $\Pi$ has to be a chord.
\end{itemize}
We cannot sacrifice any condition without contradicting our assumption of an infinite $\metaqmin$
which proves the second direction and by that completes the proof.
\end{proof}

What if we restrict a grammar $\gram$ with a set of meta-states $\metaQ$
instead of deriving this set out of the grammar?
    In that case the grammar's language possibly changes.
    Whereas when just extracting $\metaqmin$,
    the grammar's language is not altered.

\begin{definition}[\tbd{Finite meta-state DAG language (FD)}]\label{def:fdl}
A minimal deterministic grammar $\gram = (N, \Sigma, R)$
generates a \emph{finite meta-state language} $\LQ$
where a finite set $\metaQ \subseteq \mathbb{N}^Q$ is given to restrict which rules in $R$ can be used.
The derivation step $G_1 \To G_2$ is only allowed if the meta-state $\meta G_2 \in \metaQ$.
\end{definition}

Languages in the classes $\FID$ and $\FD$ can use their finite sets of meta-states,
$\metaQmin$ and $\metaQ$, respectively,
to construct a classical finite state automaton to recognize themselves.

\section{Classical Finite State Automata for DAG Languages}\label{sec:fsa}

This section describes how classical finite state automata come into play when recognizing
certain DAG languages.
An $\FD$ language can be recognized
by a classical deterministic finite state automaton (DFA).
As an example, see Figure~\ref{pic:astar} which shows the automaton for accepting DAGs of $L(\Gstar)$
as defined in Example~\ref{ex:star}.
The top-down reading process induces merely a partial order on the vertices.
The deterministic automaton thus reads a DAG in a partly nondeterministic fashion.

A DFA is a five-tuple $A = (Q, \Gamma, \delta, q_0, q_{\$})$
with the finite sets $Q$ and $\Gamma$ being the states and the alphabet, resp.,
$q_0, q_{\$} \in Q$ being the start and final state, resp., and
$\delta: (Q\times\Gamma)\to Q$ being the transition function,
extended inductively to strings $\delta (Q\times\Gamma^*)\to Q$ by applying $\delta$ symbol-wise.
We omit entries of $\delta$ if they do not lead to an accept state and thus consider only partial DFAs.

Every DAG grammar $\gramdef$ with $L(\gram)$ being a finite meta-state DAG language
gives rise to a DFA $M = (R, \metaQmin(\gram), \delta, \emptyset, \emptyset)$
such that for all $\metaq \in \metaQmin$ the transition function yields the following if  $\alpha \subseteq \metaq$ then
$$\delta(\metaq, (\alpha \sigmaarrow \beta)) = ( \metaq \setminus \alpha ) \cup \beta.$$ 

Although $M$ meets the requirements of a DFA,
we call its states in $\metaQmin$ meta-states instead to avoid confusion.
%
While reading a string with a DFA is as easy as reading it symbol by symbol,
reading a DAG is somewhat more complicated
since the vertices do not exhibit an obvious total order.
Therefore, $M$ reads the vertices partly nondeterministically.
Let $\dagdef$ be a DAG in $\D_\Sigma$.
Such a DAG has no labels assigned to its edges.
We denote this by $\lab(e) = \#$.
Since we are restricted to top-down procedures,
reading a vertex means assigning states to the outgoing edges:
Consequently, $M$ may read a vertex only if
all of its ingoing vertices have been assigned labels,
thus $\# \notin [\lab(\IN)]$.
If $\lab(v)$ matches the $\sigma$ of a rule $r = (\alpha \sigmaarrow \beta)$
such that $\delta(\metaq, r)$ yields a new state $\metaq'$ of $M$,
$M$ can read $v$ and assigns the labels $\beta$ to its outgoing edges:
$\lab(\OUT(v)) = \beta$.%
\footnote{Note that, by definition,
the empty DAG $\emptygraph$ is not in the language although
the automaton accepts it due to the start state as accepting state.
Only DAGs $G \in D_\Sigma$ are considered and $\emptygraph$ is not in~$D_\Sigma$.}

By reading the vertices top-down, $M$ will accept a DAG if it nondeterministically
chooses the right order of vertices.
The order in which the vertices are read is already restricted by imposing the requirement to read top-down.
But, we can improve upon this by fine-tuning the order in which the roots are read.
When $M$ cannot read any non-root, then there exists no vertex with all its ingoing edges labeled.
Instead of choosing an arbitrary root next 
-- since in a top-down a root has no prerequisites and can always be read top-down --
$M$ can choose a root which is needed next.
Which root $\vroot \in V$ do we need next? One where the DAG $G$
has a path from $\vroot$ to a vertex $\vnext$ whose ingoing vertices are labeled as well as not labeled.

We defined the language class $\FD$ at the end of the last section (c.f.~Definition~\ref{def:fdl}).
Now that we know how a grammar $\gram$ in combination with a finite set of meta-states $\metaQ(\gram)$
can determine a DFA for the DAG language $\Lgram$,
let us come back to the idea of restricting a grammar with a set of meta-states not derived from $\gram\!\!$.
Since languages easily become $\ID$ languages, the rules which are allowed to remain in the class of $\FID$
limit the expressiveness of languages.
The motivation behind restricting via a finite $\metaQ$
strives to increase the expressiveness of a language while keeping the number of meta-states
to recognize or generate it finite in order to profit from transferring the good algorithmic properties
of regular string languages. We use the language which comprises DAGs looking like a rainbow to illustrate
restricting a given $\ID$ language to become an $\FD$ language recognizable by an FSA, see Figure~\ref{fig:rainbow}.
\begin{figure}[t]
\captionsetup[subfigure]{justification=centering}
\subfloat[The rule cycles in $\Gbow$]{
  \label{pic:bowrulecycle}
  \begin{minipage}{.5\textwidth}
  \tiny
    a)
    $
    \anchor{r1h0}{\men\p}
    \earrow
    \anchor{r1t0}{\mex\p}
    \anchor{r1t1}{\q}
\qquad
    \anchor{r2h0}{\men\p}
    \anchor{r2h1}{\q}
    \earrow
    \anchor{r2t0}{\mex\p}
\qquad
    \anchor{R1h0}{\men\p}
    \earrow
    \anchor{R1t0}{\p}
    \anchor{R1t1}{\mex\q}
\qquad
    \anchor{R2h0}{\p}
    \anchor{R2h1}{\men\q}
    \earrow
    \anchor{R2t0}{\mex\p}
    $
    \begin{tikzpicture}[remember picture,overlay]
            \path (r1t0) edge[-latex,out=90,in=110,looseness=1.5,\pcolor] (r2h0)
                  (r2t0) edge[-latex,out=270,in=302,looseness=0.5,\pcolor] (r1h0);
            \path (R1t1) edge[-latex,out=90,in=110,looseness=1.5,\qcolor] (R2h1)
                  (R2t0) edge[-latex,out=270,in=302,looseness=0.5,\pcolor] (R1h0);
    \end{tikzpicture}

\vspace{1cm}

b)
    $
    \anchor{r1h0}{\men\p}
    \earrow
    \anchor{r1t0}{\mex\p}
    \anchor{r1t1}{\q}
\qquad
    \anchor{r2h0}{\men\p}
    \earrow
    \anchor{r2t0}{\mex\p}
    \anchor{r2t1}{\q}
\qquad
    \anchor{R1h0}{\men\p}
    \anchor{R1h1}{\q}
    \earrow
    \anchor{R1t0}{\mex\p}
\qquad
    \anchor{R2h0}{\men\p}
    \anchor{R2h1}{\q}
    \earrow
    \anchor{R2t0}{\mex\p}
\qquad
    \anchor{R1H0}{\men\p}
    \earrow
    \anchor{R1T0}{\p}
    \anchor{R1T1}{\mex\q}
\qquad
    \anchor{R2H0}{\p}
    \anchor{R2H1}{\men\q}
    \earrow
    \anchor{R2T0}{\mex\p}
    $
    \begin{tikzpicture}[remember picture,overlay]
            \path (r1t0) edge[-latex,out=90,in=110,looseness=1.5,\pcolor] (r2h0)
                  (r2t0) edge[-latex,out=270,in=302,looseness=0.5,\pcolor] (r1h0);
            \path (R1t0) edge[-latex,out=90,in=110,looseness=1.5,\pcolor] (R2h0)
                  (R2t0) edge[-latex,out=270,in=302,looseness=0.5,\pcolor] (R1h0);
            \path (R1T1) edge[-latex,out=90,in=110,looseness=1.5,\qcolor] (R2H1)
                  (R2T0) edge[-latex,out=270,in=302,looseness=0.5,\pcolor] (R1H0);
    \end{tikzpicture}
    \\~\\
  \end{minipage}
  ~\\
}
\hspace{2cm}
\subfloat[Some~DAGs~in~$L(\Gbow)$]{
  \label{pic:rainbow}
  \begin{minipage}{.4\textwidth}
\begin{tikzpicture}[node distance=8pt]
\tiny 

    \node [vertex] (xnext0) {};
\foreach \n [remember=\n as \lastn (initially 0)] in {1,...,3} {
    \node [vertex] (x\n)     [below=of xnext\lastn] {};
    \node [vertex] (xnext\n) [below=of x\n]     {};
    \path [-latex,color=tyrianpurple]  (xnext\lastn) edge node[left] {$p$} (x\n)
          (x\n) edge node[left] {$p$}  (xnext\n) ;
}
    \draw [-latex,color=pakistangreen, bend left,looseness=1.5,out=90,in=90]
          (xnext0) edge node[right] {$q$} (x3)
          (x1) edge node[right] {$q$}     (xnext2)
          (xnext1) edge node[right] {$q$} (x2)
	  ;

\end{tikzpicture}
\begin{tikzpicture}[node distance=8pt]
\tiny 

    \node [vertex] (xnext0) {};
\foreach \n [remember=\n as \lastn (initially 0)] in {1,...,5} {
    \node [vertex] (x\n)     [below=of xnext\lastn] {};
    \node [vertex] (xnext\n) [below=of x\n]     {};
    \path [-latex,color=tyrianpurple]  (xnext\lastn) edge node[left] {$p$} (x\n)
          (x\n) edge node[left] {$p$}  (xnext\n) ;
}
    \draw [-latex,color=pakistangreen, bend left,looseness=1.5,out=90,in=90]
          (xnext0) edge node[right] {$q$} (x5)
          (x1) edge node[right] {$q$}     (xnext4)
          (xnext1) edge node[right] {$q$} (x4)
          (x2) edge node[right] {$q$}     (xnext3)
          (xnext2) edge node[right] {$q$} (x3)
	  ;

\end{tikzpicture}
\begin{tikzpicture}[node distance=8pt]
\tiny 

    \node [vertex] (xnext0) {};
\foreach \n [remember=\n as \lastn (initially 0)] in {1,...,7} {
    \node [vertex] (x\n)     [below=of xnext\lastn] {};
    \node [vertex] (xnext\n) [below=of x\n]     {};
    \path [-latex,color=tyrianpurple]  (xnext\lastn) edge node[left] {$p$} (x\n)
          (x\n) edge node[left] {$p$}  (xnext\n) ;
}
    \draw [-latex,color=pakistangreen, bend left,looseness=1.5,out=90,in=90]
          (xnext0) edge node[right] {$q$} (x7)
          (x1) edge node[right] {$q$}     (xnext6)
          (xnext1) edge node[right] {$q$} (x6)
          (x2) edge node[right] {$q$}     (xnext5)
          (xnext2) edge node[right] {$q$} (x5)
          (x3) edge node[right] {$q$}     (xnext4)
          (xnext3) edge node[right] {$q$} (x4)
	  ;

\end{tikzpicture}
\end{minipage}
}
\\~\\~\\
\subfloat[An FSA for the language $L^{\{ pq^k |\; k>0 \}}(\Gbow)$ with a fixed $k \in \mathbb N$.]{
  \label{pic:abow}

  \begin{tikzpicture}[->,node distance=3cm,
    ever state/.style={inner sep=4cm},
    initial text=]
\node[initial,state,accepting] (0) {$\emptyset$};
\node[state] (r) [right of=0] {$pq$};
\node[font=\scriptsize,state, align=center] (rr) [right of=r,text width=5mm] {$pqq$~};
\node[font=\scriptsize,state, align=center] (rrr) [right of=rr,text width=3mm] {$pq^3$};
\node[font=\scriptsize,state, align=center] (rrrr) [right of=rrr,text width=3mm] {$pq^k$};

\path
( 0) edge [bend left]  node   [above]      { \bowrzero}         ( r)
( r) edge [bend left]  node   [above]      { \bowrone}         (rr)
(rr) edge [bend left]  node   [above]      { \bowrone}         (rrr)
(rrr) edge [bend left]  node   [above]      {}         (rrrr)
(rrrr) edge [bend left]  node   [below]      {}         (rrr)
(rrr) edge [bend left]  node   [below]      { \bowrtwo}         ( rr)
(rr) edge [bend left]  node   [below]      { \bowrtwo}         ( r)
( r) edge [bend left]  node   [below]      { \bowrtwozero}         ( 0)
%
%
%
;

\path (rrr) -- (rrrr) node[midway,shape=rectangle,fill=white, minimum height=3cm] (dots)
{~\dots ~~};
\end{tikzpicture}
}
\captionsetup{subrefformat=parens}
\caption{The grammar $\Gbow$ generates, among others, DAGs looking like rainbows (\emph{vertices implicit}) \subref{pic:bowrulecycle} via the rule cycles in \subref{pic:rainbow}. For a fixed $k \in \mathbb N$, restricting the grammar by $\metaQ = \{pq^k |\; k>0 \}$ allows the construction of an FSA and imposes a bound on the number of bows in a `rainbow'.}\label{fig:rainbow}
\end{figure}
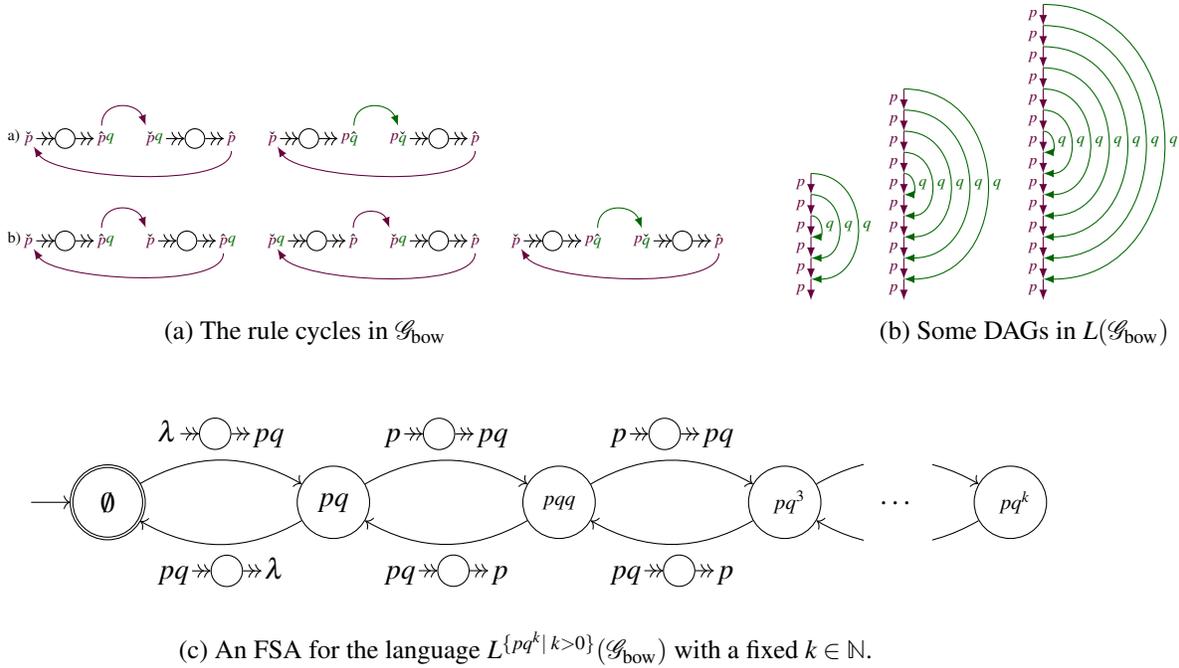

\begin{theorem}[\tbd{FD $\nsubseteq$ $\RDLdet$}]\label{theorem:metastatebounded}
The class of finite meta-state DAG languages is not a subset
of the class of top-down deterministic regular DAG languages $\RDLdet\!$.
\end{theorem}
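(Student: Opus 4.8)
The plan is to use closure under edge swap as a \emph{necessary} condition for membership in $\RDL$, and hence in $\RDLdet\subseteq\RDL$, and then to exhibit a language in $\FD$ that violates it. Recall that, by Lemma~\ref{theorem:swap}, for every regular DAG grammar $\gram$ and every derivation DAG $D$ with $\DAG{D}\in\Lall$, the edge swap of two equally labelled, independent edges of $D$ again yields a DAG in $\Lall$. Thus the $\&$-version of every regular DAG language is closed under such swaps, and in particular so is that of every $L\in\RDLdet$. So it suffices to produce some $L\in\FD$ that is not closed under edge swap; then $L\notin\RDL$, hence $L\notin\RDLdet$, which is even stronger than the claim ($\FD\not\subseteq\RDL$).

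For the witness I take the ``rainbow'' grammar $\Gbow$ of Figure~\ref{fig:rainbow} -- a minimal deterministic DAG grammar -- together with a finite set of meta-states $\metaQ$, say $\metaQ=\{\,pq^{j}\mid 0\le j\le k\,\}$ (augmented by the initial/final meta-state $\emptyset$ if needed) for a fixed $k\ge1$, so that $L:=\LQbow\in\FD$ holds directly by Definition~\ref{def:fdl}. The first step is to isolate the invariant announced in the caption of Figure~\ref{fig:rainbow}: the side condition $\meta{G}\in\metaQ$ bounds, by a constant $N=N(k)$, the number of $\q$-labelled edges of any DAG in $L$. The point is that the $\q$-arcs of a rainbow generated by $\Gbow$ are all simultaneously ``dangling'' at one stage of a top-down derivation, so $\metaQ$ caps how many there can be -- and this must be argued for \emph{every} top-down derivation of \emph{every} $G\in L$, since top-down reading only imposes a partial order on vertices. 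Since $L(\Gbow)$ itself (unrestricted) is infinite and contains rainbows with arbitrarily many $\q$-labelled edges, $L$ contains a rainbow $G$ whose number $m$ of $\q$-labelled edges attains the bound $N$ (so $m=N\ge1$).

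The contradiction then comes from the swap construction already present in the paper. As in the decorated-star construction of Figure~\ref{fig:noncycle}, pick a $\q$-labelled edge $e$ of $G$ along which two copies of $G$ are chained, and form $\swapping{G}{e}{e}{1}=(G_{0}\union G_{1})[e_{0}\bowtie e_{1}]$, where $G_{0},G_{1}$ are disjoint copies of $G$ and $e_{0},e_{1}$ the corresponding copies of $e$. Since $e_{0}$ and $e_{1}$ lie in different components they are independent, so the swap is defined, and $\lab(e_{0})=\lab(e_{1})=\q$. Choosing $e$ as in the analogous star case, the resulting DAG $G'$ is connected and has $2m$ edges labelled $\q$. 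Now $G_{0}\union G_{1}$ is the underlying DAG of a (disconnected) derivation DAG of $\Gbow$, hence of any regular DAG grammar generating $L$, should one exist; applying Lemma~\ref{theorem:swap} to such a grammar, $G'$ would lie in the $\&$-version of $L$, and being connected it would lie in $L$ itself -- contradicting the invariant, because $2m>N$. Hence $L$ is not closed under edge swap, so $L\notin\RDL\supseteq\RDLdet$, while $L\in\FD$; this proves $\FD\not\subseteq\RDLdet$.

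The main obstacle is the invariant of the second step: showing that the finite cap on admissible meta-states really bounds the number of $\q$-labelled edges in every DAG of $L$, uniformly over all of its top-down derivations -- one has to rule out that a clever interleaving of the partially ordered reading order hides $\q$-arcs, which requires the explicit rainbow rules of Figure~\ref{pic:bowrulecycle}. The complementary check -- that the chosen $e$ really makes $\swapping{G}{e}{e}{1}$ connected with $2m$ many $\q$-labelled edges, rather than a disconnected or otherwise degenerate graph -- is routine once the shape of $\Gbow$'s derivations is pinned down, exactly as in the star computation underlying Figure~\ref{fig:noncycle}.
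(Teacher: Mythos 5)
Your overall strategy coincides with the paper's: both take $\Gbow$ restricted by a finite $\metaQ$ as the witness, and both derive the contradiction from the fact that $\RDL$ (hence $\RDLdet$) is closed under the edge swap of Lemma~\ref{theorem:swap}, while the restricted language is not. The difference is only cosmetic (the paper swaps within a long garland to manufacture a forbidden rainbow; you swap between two disjoint copies of a maximal rainbow).

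However, there is a genuine gap in your key invariant. You claim that $\meta{G}\in\metaQ$ with $\metaQ=\{pq^{j}\mid 0\le j\le k\}$ bounds \emph{the total number of $\q$-labelled edges of any DAG in $L$} by a constant $N(k)$. This is false: a garland of length $n$ (derived via $\emptygraph\To_{r}\cdot\To_{(\crule\orule)^{*}}\cdot\To_{l}G$, opening and closing one $\q$-arc at a time) lies in $L$ already for $\metaQ=\{\emptyset,p,pq\}$ and contains $n$ edges labelled $\q$ for arbitrary $n$. What the meta-state restriction actually bounds is the number of $\q$-edges that must be \emph{simultaneously dangling} in every top-down derivation --- i.e.\ the nesting depth of the $\q$-arcs --- not their total count. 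Consequently your final step, ``$G'$ has $2m>N$ edges labelled $\q$, contradiction,'' does not go through: $G'$ could a priori still be derivable within $\metaQ$ if its $\q$-arcs could be opened and closed a few at a time. To close the gap you must argue structurally that the cross-linked double rainbow $G'=(G_{0}\union G_{1})[e_{0}\bowtie e_{1}]$ forces more than $k$ dangling $\q$-edges in \emph{every} derivation: after swapping the outermost arcs, closing copy $0$'s outer arc requires generating copy $1$'s spine down to its penultimate vertex and vice versa, so both nested families of arcs must be open at once. That argument is exactly the non-routine content you deferred as ``the main obstacle,'' and with the invariant corrected it is what carries the proof; as written, the edge-counting version of the invariant is simply not true of $L$.
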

\begin{proof}
Given is a minimal deterministic DAG grammar $\gram = (N, \Sigma, R)$ and a finite set of meta-states $\metaQ$.
If FD $\nsubseteq$ $\RDLdet$ holds, then
there exists a DAG language $\LQ$ that is not in the class of $\RDLdet\!$.
Our $\gram$ is a deterministic DAG grammar and as such generates a language $\Lgram \in \RDLdet\!$. 

We try to construct a language not in $\RDLdet\!$ by limiting $\gram$'s language to a finite one.
Suppose, $\gram$ generates an infinite language $\Lgram$
and $\metaQ$ prevents derivations of any rule cycle in $R$.
According to Lemma~\ref{theorem:linfiffrc}, then, $\LQ$ will be finite.
However, Lemma~\ref{theorem:finite} tells us that
any finite language can be generated by a deterministic DAG grammar $\gram' \neq \gram$
such that~$L(\gram') = \LQ$. This attempt did not work out.

Our next attempt is the grammar $\Gbow$ with its infinite language $\Lbow \in \ID$.
Its rule set comprises $
(\lambda \rarrow{r} pq      ), 
(p       \rarrow{o} pq      ), 
(pq      \rarrow{c} p       )%
$ and 
$%
(pq      \rarrow{l} \lambda )$.
The grammar $\Gbow$ can among others generate DAGs similar to garlands and rainbows.
To generate an arbitrary long DAG looking like a garland,
the set of meta-states
$\metaQ = \{\emptygraph, p, pq\}$ suffices.
If, however, $\Gbow$ generates a DAG looking like a rainbow,
it repeats the rule with the vertex label~$\osymbol$
an unbounded number of times.
\footnote{%
Allowing pennants spanning more than one vertex is possible, too.
With $\metaQ = \{pq^n \mid p,q \in N \text{ and } n \in \mathbb{N}\}$
the bow(s) for the pennants in a DAG $G_{garland} \in \LQ$ can span the maximum of $n$ vertices.%
}
Such a rainbow DAG in $\LQ$, with the length of the bow(s) limited,
can only be generated by a deterministic DAG grammar if the grammar's language is finite.
But, a finite language cannot include DAGs with $n$-sized bows of arbitrary length.
On the other hand, any grammar generating garland DAGs of unbounded length will
also include the rainbow DAGs due to the possibility to swap the edges (Lemma~\ref{theorem:swap}).
And, according to Lemma~\ref{theorem:unbound}, we can swap the edges
since the edge labels (the nonterminals) must be repeated for a DAG of unbounded length
and distinct edges labeled with the same nonterminal in a derivation DAG can be swapped.
Thus, restricting the language to $\LQbow$ by above given $\metaQ$
results in a language not in $\RDLdet\!$,
completing the proof.
\end{proof}

Again, swapping (Lemma~\ref{theorem:swap}) would allow us
to generate unbounded rainbows, so we conclude:
restricting a grammar by a set of meta-states
can prevent the swapping operation.

\begin{corollary}[Swapping in $\FD$]\label{cor:FDnoswap}
For a grammar $\gram$ with its $\LQ \in \FD\:\backslash\:\FID$ holds:
\begin{itemize}
\item $\exists\:\gd\!\in\!\LQ: \gswap \notin \LQ$
\item $| \LQ | < | \Lgram |$
\item $\Lgram = \{\:\gswap\:|\:\swapdee \in \LQ\:\}$
\item $\Lgram \in \ID$
\end{itemize}
\end{corollary}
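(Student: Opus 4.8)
The plan is to treat the four items as a short chain anchored by the characterization of $\ID$ (Theorem~\ref{theorem:charrule}) and the swap-transfer Lemma~\ref{theorem:swap}, with $\Gbow$ of Theorem~\ref{theorem:metastatebounded} as the guiding concrete instance. Fix the finite set $\metaQ$ witnessing $\LQ\in\FD$, so $\LQ=L^{\metaQ}(\gram)$ with $\gram$ minimal deterministic, and note $\Lgram\in\RDLdet=\FID\cup\ID$ by Observation~\ref{obs:fididrdldet}. I would dispatch $|\LQ|<|\Lgram|$ first, reading it as proper inclusion: if $\LQ=\Lgram$, then every DAG of $\gram$ has a derivation staying inside $\metaQ$, so $\metaQ$ is a sufficient set of meta-states for $\Lgram$ and $|\metaqmin|\le|\metaQ|<\infty$; hence $\Lgram\notin\ID$ by Definition~\ref{def:idl}, i.e.\ $\Lgram\in\FID$, and then $\LQ=\Lgram\in\FID$ contradicts the hypothesis. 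So $\LQ\subsetneq\Lgram$.

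Next I would establish $\Lgram\in\ID$, i.e.\ $\metaqmin$ is infinite; this is where I expect the real work. The point is that a \emph{finite} restriction cannot turn a grammar with finite $\metaqmin$ into something outside $\FID$: if $\metaqmin$ were finite, then, since $\metaQ$ is finite, it bounds $\max_{\metaq\in\metaQ}|\metaq|$ and hence the size of the active nonterminal frontier throughout every $\metaQ$-restricted derivation, so the restriction can be absorbed into finitely much extra information, yielding a minimal deterministic grammar for $\LQ$ with finite $\metaqmin$ -- placing $\LQ\in\FID$, a contradiction. Making this precise is delicate because the meta-state is a global multiset rather than a local label; here I would exploit that the $\metaQ$-restricted derivations are exactly the runs of the finite device $M$ of Section~\ref{sec:fsa}, and that a finite $\metaqmin$ for $\gram$ means, via Theorem~\ref{theorem:charrule}, that no iterable rule cycle of $\gram$ carries an unmarked nonterminal rejoined by a chord rule path, so the relevant derivations can be rerouted to stay bounded. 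Granting this, $\metaqmin$ is infinite and $\Lgram\in\ID$, which is the fourth item.

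For the remaining two items I would use Theorem~\ref{theorem:charrule} in the forward direction: the infinite $\metaqmin$ is witnessed by a rule cycle $c$ carrying an unmarked nonterminal $q$ together with a chord rule path $\Pi$ between $q$ and a marked nonterminal of $c$ -- the nested ``rainbow'' pattern of Figure~\ref{fig:rainbow}. Iterating $c$ while keeping the generated chords un-nested (garland-like derivations, as for $\Gstar$ in Example~\ref{ex:star}) costs only boundedly many meta-states, so enough such garland DAGs lie in $\LQ$; picking a derivation DAG $D$ of one of them, say $\DAG D=G'\in\LQ$, and two chord edges $e_0,e_1$ (which share a nonterminal label), Lemma~\ref{theorem:swap} gives $\DAG{D(e_0\bowtie e_1)}\in\Lgram$, but this swap nests one chord inside another and, iterated, drives the required meta-state past $\metaQ$, so $\DAG{D(e_0\bowtie e_1)}\notin\LQ$ (choosing the swap so connectivity is preserved) -- the first item. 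For the third item (read as: $\Lgram$ is the closure of $\LQ$ under edge swaps), ``$\supseteq$'' is Lemma~\ref{theorem:swap} applied to $\LQ\subseteq\Lgram$, and for ``$\subseteq$'' I would show that every edge-swap class of $\Lgram$ already meets $\LQ$: from any $G\in\Lgram$ with derivation DAG $D$, repeatedly swap equally-labelled chord edges so as to un-nest every chord produced by $c$ and $\Pi$; each such swap keeps $\DAG D\in\Lgram$ (Lemma~\ref{theorem:swap}) and strictly lowers the largest meta-state needed along a suitable reordering of the derivation, so after finitely many swaps the DAG is derivable within $\metaQ$, hence in $\LQ$, and reversibility of swaps then exhibits $G$ as a swap-image of that $\LQ$-member. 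The main obstacle is the bookkeeping shared by the fourth item and the ``$\subseteq$'' inclusion: pinning down that ``un-nesting chords by swapping'' exactly trades an unbounded meta-state for a bounded one, which is precisely where the characterization of $\ID$ carries the load.
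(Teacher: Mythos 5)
Your overall route -- anchoring everything on Lemma~\ref{theorem:swap} and Theorem~\ref{theorem:charrule}, with $\Gbow$ as the guiding instance -- is the same one the paper gestures at; the paper in fact offers no formal proof of this corollary, only the discussion paragraph following it, so you are attempting considerably more than the source does. The second item, read as proper inclusion, is handled correctly: $\LQ=\Lgram$ would make $\metaQ$ a sufficient finite set of meta-states, forcing $\metaqmin$ finite and $\LQ\in\FID$.

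The genuine gap is the step you yourself flag as ``delicate'' in the proof of $\Lgram\in\ID$: the claim that if $\metaqmin$ were finite, the finite restriction $\metaQ$ ``can be absorbed into finitely much extra information, yielding a minimal deterministic grammar for $\LQ$ with finite $\metaqmin$.'' This is not merely unproved, it is false, and Theorem~\ref{theorem:metastatebounded} (which you cite) is precisely the statement that meta-state restrictions are global constraints that deterministic DAG grammars cannot in general express. Concretely, take $\gram=\Gstar$ from Example~\ref{ex:star}, which has $\metaqmin=\{pq,p^2q^2\}$ and hence $L(\Gstar)\in\FID$, and restrict it by $\metaQ=\{\emptyset,pq\}$. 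Then $\LQ$ is the singleton containing the one-pointed star. By the disjoint-union-plus-swap argument (Lemma~\ref{theorem:swap} applied to two copies of its derivation DAG), \emph{every} regular DAG grammar generating the one-pointed star also generates the connected two-pointed star, so this singleton is not in $\RDLdet$ at all and therefore not in $\FID$; it does lie in $\FD$ by construction. The hypothesis $\LQ\in\FD\setminus\FID$ is thus satisfied while $\Lgram\in\FID$, so the contradiction you are aiming for cannot be derived, and the fourth item (and with it your use of Theorem~\ref{theorem:charrule} on $\gram$ to obtain the cycle-plus-chord structure underlying the first and third items) does not follow from the stated hypothesis alone. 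To make your chain work you would need an additional assumption -- essentially that $\metaQ$ still permits unboundedly many iterations of some rule cycle of $\gram$, as in the $\Gbow$ construction -- rather than deriving $\Lgram\in\ID$ from $\LQ\notin\FID$. The remaining bookkeeping for the third item (un-nesting chords by swaps strictly decreases the needed meta-state) is plausible but is likewise only sketched, both by you and by the paper.
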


Contrary to $\FID$-grammars on whose derivation DAGs swapping is always allowed,
the derivation DAGs of grammars generating languages in $\LQ \in \FD \backslash \FID $
the swapping operation is restricted by the given set of meta-states $\metaQ$.
Restricting a grammar by allowing only certain meta-states
corresponds to restricting swapping on the derivation DAGs.
Via those restrictions, DAGs are lost which cannot be accepted without the missing meta-states
resulting in a language with less graphs.
The transitive closure of the swapping operation on the language $\LQ$
returns the original language $\Lgram$ which must be in the class $\ID$.

%
%
%

For free -- by $\FD$ definition -- we can observe the closure properties valid for DFAs,
since $\FD$ languages are recognized by DFAs.
\begin{observation}[$\FD$ -- Closure under Union and Intersection]\label{obs:TODODEFLABEL}
~\\The language class $\FD$ is closed under union and intersection.
\end{observation}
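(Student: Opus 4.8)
The plan is to reduce both statements to the classical fact that finite state automata are closed under union and intersection, exploiting the correspondence made explicit in Section~\ref{sec:fsa}: a DAG language lies in $\FD$ exactly when its DAGs are the ones accepted by a meta-state DFA $(R,\metaQmin,\delta,\emptyset,\emptyset)$ obtained from a (minimal) deterministic grammar $\gram=(N,\Sigma,R)$ together with a finite meta-state set $\metaQ$. So it suffices to show that this collection of DAG acceptors is closed under union and under intersection. A preliminary observation I would record first is that, for a deterministic grammar, the meta-state reached after a top-down run has processed a downward-closed vertex set $S$ is simply the multiset of nonterminals labelling the edges leaving $S$ in the (unique) derivation DAG; it depends on $S$ and the grammar only, not on the order in which the vertices of $S$ were read. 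This makes ``being a $\metaQ$-legal run'' a property of the chain of cuts rather than of a particular linearisation.

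For \emph{union}, I would take $L_i=L^{\metaQ_i}(\gram_i)$ with $\gram_i=(N_i,\Sigma,R_i)$, rename nonterminals so that $N_1\cap N_2=\emptyset$, and use the grammar $\gram_\cup=(N_1\cup N_2,\Sigma,R_1\cup R_2)$ with $\metaQ_\cup=\metaQ_1\cup\metaQ_2$, which is again finite. Since every rule of $R_i$ mentions only nonterminals of $N_i$, no rule can create or resolve a vertex whose incident temporary edges carry labels from both $N_1$ and $N_2$; hence any derivation of a \emph{connected} DAG uses rules from a single $R_i$, $\gram_\cup$ is deterministic, and $L^{\metaQ_\cup}(\gram_\cup)=L_1\cup L_2$. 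If Definition~\ref{def:fdl} is read strictly (the grammar must be minimal), I would afterwards replace $\gram_\cup$ by its minimal deterministic equivalent via the Minimal Grammar theorem, pushing $\metaQ_\cup$ forward along the induced identification of nonterminals.

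For \emph{intersection}, I would run the two automata in lock-step by forming the product grammar $\gram_\times$ over the nonterminal alphabet $N_1\times N_2$, whose rules arise from all pairs $r_1=(\alpha_1\sigmaarrow\beta_1)\in R_1$ and $r_2=(\alpha_2\sigmaarrow\beta_2)\in R_2$ carrying the same terminal $\sigma$ with $|\alpha_1|=|\alpha_2|$ and $|\beta_1|=|\beta_2|$, obtained by pairing heads and tails position by position, and with the finite meta-state set $\metaQ_\times=\{\mu\in\mathbb N^{N_1\times N_2}\mid\pi_1(\mu)\in\metaQ_1\text{ and }\pi_2(\mu)\in\metaQ_2\}$, where $\pi_i$ denotes the componentwise multiset projection onto $N_i$. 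Then $\gram_\times$ is deterministic, and every $\metaQ_\times$-legal derivation of a DAG $G$ projects componentwise onto a $\metaQ_1$-legal $\gram_1$-derivation and a $\metaQ_2$-legal $\gram_2$-derivation of $G$, so $L^{\metaQ_\times}(\gram_\times)\subseteq L_1\cap L_2$; the same minimisation remark as above applies.

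The step I expect to be the main obstacle is the converse inclusion for intersection, which is precisely where DAGs differ from strings: a DAG $G\in L_1\cap L_2$ comes with \emph{some} top-down reading order keeping all $\gram_1$-cuts inside $\metaQ_1$ and, separately, with \emph{some} order keeping all $\gram_2$-cuts inside $\metaQ_2$, but the product automaton needs a \emph{single} order doing both at once, and a priori these two witnessing orders differ. So the heart of the argument is to show that a common order can always be extracted; my first attempt would be an exchange/shuffle argument that interleaves the two given orders while never stepping onto a cut forbidden by either restriction, using the cut-only characterisation from the first paragraph, and the fallback, should a direct argument resist, is to loosen the construction (choosing the product's edge labels and $\metaQ_\times$ less rigidly) so that a joint order is no longer required. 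Everything else — determinism of $\gram_\cup$ and $\gram_\times$, finiteness of $\metaQ_\cup$ and $\metaQ_\times$, and the projection arguments — is routine, and the algorithmic machinery known for DFAs then transfers to $\FD$ as advertised.
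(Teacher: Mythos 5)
The paper does not actually supply a detailed proof here: the observation is justified by the single remark that the closure properties come ``for free'' because $\FD$ languages are recognized by DFAs. Your proposal is the natural attempt to substantiate that remark, and in working it out you have correctly located the place where the one-line argument does not go through. The classical product construction proves closure of DFA-recognizable \emph{string} languages, i.e.\ of the sets of rule sequences accepted by the meta-state automata; but a DAG lies in $L^{\metaQ_1}(\gram_1)\cap L^{\metaQ_2}(\gram_2)$ as soon as it has \emph{some} $\metaQ_1$-legal reading order and \emph{some} $\metaQ_2$-legal reading order, and these two witnessing linearisations of the same DAG need not coincide. Your product automaton demands a single common order, so you only get $L^{\metaQ_\times}(\gram_\times)\subseteq L_1\cap L_2$, exactly as you say. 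The exchange/shuffle argument you sketch as a repair is not obviously available: the two chains of downward-closed cuts live in the lattice of antichains of the same DAG, but a meet or interleaving of a $\metaQ_1$-legal chain with a $\metaQ_2$-legal chain can pass through cuts forbidden by both restrictions, and nothing in the paper rules this out. This is a genuine gap, not a routine verification, and it is present in the paper's own justification as well.

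There is a second, smaller problem in your union construction. Renaming so that $N_1\cap N_2=\emptyset$ separates all rules with nonempty heads, but every useful grammar has root rules with head $\lambda$, and if $\gram_1$ and $\gram_2$ both contain a rule $\lambda\sigmaarrow\beta_i$ for the same terminal $\sigma$ with $\beta_1\neq\beta_2$, then $\gram_\cup$ violates the determinism condition of Definition~\ref{def:grammar} (two tails for the pair $\lambda,\sigma$), and Definition~\ref{def:fdl} requires a minimal \emph{deterministic} grammar. Determinising would again force a product-style construction on the nonterminals, which reintroduces the reading-order problem above. So neither half of the observation is ``for free'': the union needs a way to restore determinism at the roots, and the intersection needs a lemma asserting that a common legal linearisation can always be found (or a different acceptor for the intersection altogether). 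Flagging the intersection obstacle explicitly, as you did, is the right instinct; the proof as proposed is incomplete until that lemma is supplied or refuted.
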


\section{Conclusion}

We have defined the DAG language classes $\ID$ and $\FD$
and characterized them.
By imposing the set of meta-states as a given restriction,
we additionally have defined $\FD$, which intersects with $\RDL$ but is not a subset of it -- adding to expressiveness. 
For languages in $\FID$ and $\FD$,
we proved that it is possible to construct a classical string automaton recognizing the language.
Analysing $\ID$ further, we expect similarities with the Chomsky hierarchy.

\begin{pdconjecture*}\label{conj:id}%
Analogous to languages in $\FD$ being recognized by a finite state automaton,
we conjecture all languages in $\ID$ to be recognized by a pushdown automaton.
\end{pdconjecture*}

The FSA construction by meta-states,
could be applied not only to the top-down deterministic version, but also to the plain regular DAG automaton.
By dropping the determinism restriction, similar to dropping
the planarity restriction \cite{DBLP:journals/corr/abs-1810-12266} imposed
in \cite{kamimura-slutzki:81}, possibly the NP-completeness for the membership problem could be tackled.
Imposing useful restrictions to be provided by the set of meta-states~$\metaQ$,
as the language class $\FD$ requires it,
is the task for more application centric research,
like semantic NLP parsing~\cite{quernheim-knight:12b,banarescu.etal:13}.
\newpage

\bibliographystyle{eptcs}
\bibliography{bibliography}

\newpage
\section{Appendix}\label{sec:appendix}
\printProofs
\end{document}